%% file: main.tex
\documentclass[lettersize,journal]{IEEEtran}
\usepackage{amsmath,amssymb,amsfonts}
\usepackage{algorithmic}
\usepackage[ruled,vlined,linesnumbered]{algorithm2e}
\usepackage{graphicx}
\usepackage{subcaption}
\usepackage{textcomp}
\usepackage{xcolor}
\usepackage{booktabs}
\usepackage{multirow}
\usepackage{colortbl}
\usepackage{makecell}
\usepackage{amsthm}
\usepackage{pifont} 

\usepackage{textcomp}
\usepackage{stfloats}
\usepackage{verbatim}
\usepackage{orcidlink}

\definecolor{my_c1}{HTML}{eff8e9}
\definecolor{my_c2}{HTML}{eaf9f8}

\usepackage{etoolbox}        

\newtheorem{theorem}{Theorem}
\newcommand{\model}[1]{Floe}

\newcommand{\GetAvailableMemory}{\text{GetAvailableMemory}}
\newcommand{\PredictMemory}{\text{PredictMemory}}
\newcommand{\PredictLatency}{\text{PredictLatency}}
\newcommand{\SELECTRANK}{\text{SELECT\_RANK}}

\def\BibTeX{{\rm B\kern-.05em{\sc i\kern-.025em b}\kern-.08em
    T\kern-.1667em\lower.7ex\hbox{E}\kern-.125emX}}
\begin{document}

\title{\model~: Federated Specialization for Real-Time LLM–SLM Inference}

\author{Chunlin Tian$^{\orcidlink{0009-0009-5220-1609}}$,
Kahou Tam$^{\orcidlink{0000-0001-5816-6837}}$, 
Yebo Wu$^{\orcidlink{0000-0002-2422-1356}}$, 
Shuaihang Zhong, Li Li$^{\orcidlink{0000-0002-2044-8289}}$,
Nicholas D. Lane $^{\orcidlink{0000-0002-2728-8273}}$,
\\
ChengZhong Xu$^{\orcidlink{0000-0001-9480-0356}}$, ~\IEEEmembership{Fellow,~IEEE}}

\maketitle

\input{sec/01_abs}
\input{sec/02_intro}

\input{sec/03_rw}

\input{sec/04_obs}

\input{sec/06_exp}

\input{sec/07_con}

\bibliographystyle{ieeetr}
\bibliography{main}

\end{document}

%% file: sec/01_abs.tex
\begin{abstract} 
    Deploying large language models (LLMs) in real-time systems remains challenging due to their substantial computational demands and privacy concerns.
    We propose \model~, a hybrid federated learning framework designed for latency-sensitive, resource-constrained environments. \model~ combines a cloud-based black-box LLM with lightweight small language models (SLMs) on edge devices to enable low-latency, privacy-preserving inference. Personal data and fine-tuning remain on-device, while the cloud LLM contributes general knowledge without exposing proprietary weights. A heterogeneity-aware LoRA adaptation strategy enables efficient edge deployment across diverse hardware, and a logit-level fusion mechanism enables real-time coordination between edge and cloud models. 
    Extensive experiments demonstrate that \model~ enhances user privacy and personalization. Moreover, it significantly improves model performance and reduces inference latency on edge devices under real-time constraints compared with baseline approaches.
\end{abstract}

\begin{IEEEkeywords}
 Federated learning, Privacy-Preserving, Edge-Cloud Collaboration.
\end{IEEEkeywords}

%% file: sec/02_intro.tex
\section{Introduction}

\begin{figure*}[!t]
    \centering
    \includegraphics[width = 0.9
    \linewidth]{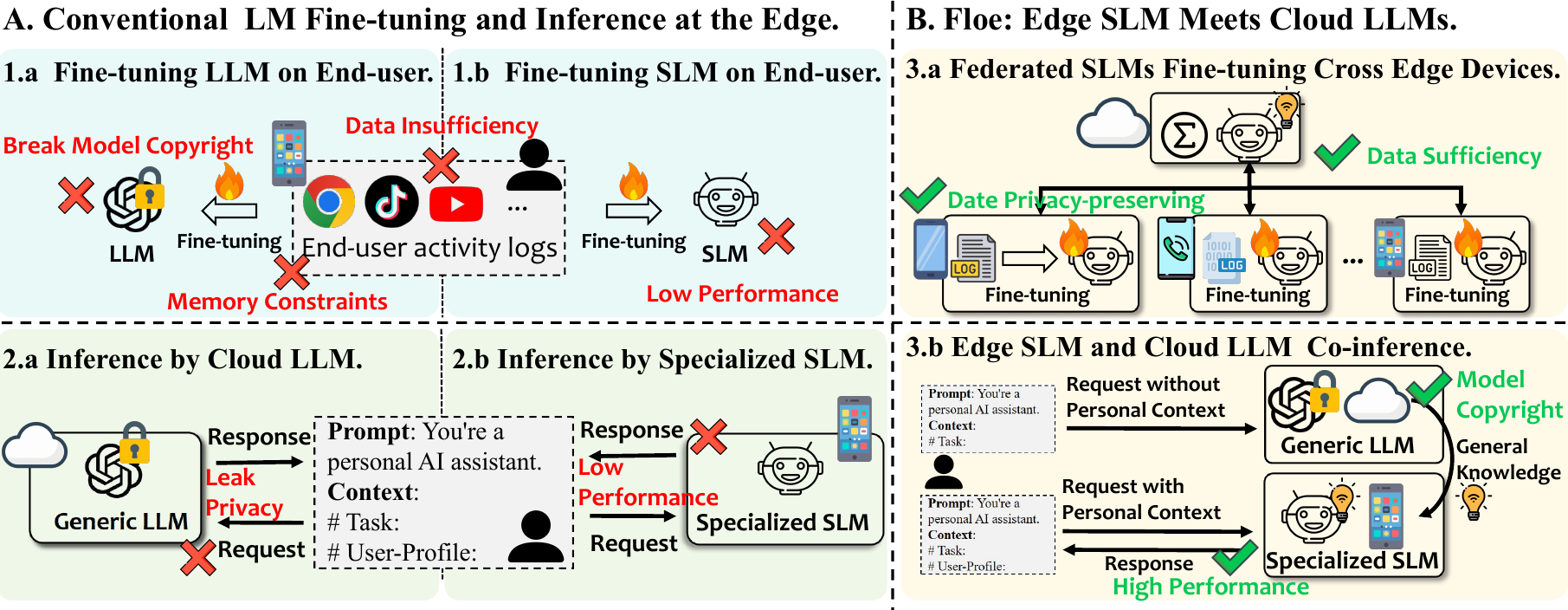}
    \caption{Comparison of existing LM fine-tuning and inference approaches with \model~. \textit{Convention:} (1.a) End-user fine-tunes LLM directly at the edge, breaking model copyright and facing memory constraints. (1.b) End-user fine-tunes specialized SLM at the edge is personalized but underperforms. (2.a) Inference by cloud LLM outperforms but risks privacy. (2.b) Inference by specialized edge SLM is privacy-friendly but underperforms. \model~: (3.a) Federated SLMs fine-tuning cross-edge devices with data sufficiency and privacy. (3.b) Harmonizing LLMs and SLMs improves privacy and performance.}
    \label{fig:intro}
\end{figure*}

\IEEEPARstart{L}{arge} Language Models (LLMs) \cite{GPT4} have become indispensable for natural language processing tasks \cite{microsoft_bing_new_features,BloombergGPT}, yet their reliance on centralized cloud infrastructure creates critical challenges in data privacy, latency, and personalized adaptation (Figure \ref{fig:intro}.1a-\ref{fig:intro}.2a). Federated Language Models \cite{kuang2024federatedscope,wu2025survey} address privacy concerns by decentralizing training while maintaining data locality. However, vanilla FedLLM faces three real-world limitations.
First, proprietary restrictions hinder practical deployment: state-of-the-art LLMs like Gemini Ultra \cite{gemma} ($\$191$M training cost) and GPT-4 \cite{GPT4} ($\$78$M cost) \cite{AI_Charts} remain closed-source due to their immense development investments, with closed-source models consistently outperforming open-source alternatives \cite{AI_Charts}. Second, hardware constraints limit feasibility: even open-source LLMs (e.g., DeepSeek \cite{deepseekai2024deepseekv3technicalreport}, LLaMA \cite{Llama}) require prohibitive resources, deploying LLaMA-65B in FP16 precision requires approximately 130 GB of memory, while full-parameter fine-tuning (FFT) exceeds 800 GB \cite{lv2023full}. Third, edge heterogeneity complicates training: divergent device capabilities, memory limits, and runtime variance create system-level bottlenecks.

To address these challenges, federated language model protocols \cite{wu2025survey} have been proposed, yet they remain fragmented in tackling the full scope of constraints. Existing methods focus on isolated aspects with notable limitations: 1) Parameter-efficient fine-tuning (PEFT) approaches \cite{FederatedScope-LLM,Fate-llm} reduce training overhead but require edge devices to store full LLM copies (e.g., billions of parameters), incurring prohibitive memory costs. 2) Compression-based solutions \cite{offsitetuning, zeroquant,cicc} lower resource demands but risk domain knowledge loss due to lossy tuning and require computationally intensive preprocessing \cite{ma2023llm}.
Moreover, most PEFT-based and compression-based approaches require access to intermediate layer parameters, undermining LLM intellectual property protection.
3) Cryptography-enhanced frameworks \cite{FedTracker,NSDI_FL,NSDI_FL_2,fedipr} protect model IP through ownership verification but introduce communication overhead and fail to guarantee edge device security against attacks. Although small language models (SLMs) \cite{gemma,team2023gemini} now enable on-device deployment, they lack the advanced capabilities and tool integration (Figure \ref{fig:intro}.1b-\ref{fig:intro}.2b) inherent to LLMs, which rely on massive training data and computational power. This situation presents a research question: \textit{How can we balance performance and privacy through collaborative yet privacy-preserving integration of LLMs and SLMs?}

Designing a high-performance federated language models framework for personalized edge-device customization faces three critical challenges: 1) Privacy-preserving Collaboration: Protecting data and model ownership requires keeping raw user data (edge) and LLM parameters (cloud) private. The challenge lies in harnessing decentralized edge data and compute resources to enhance downstream performance without accessing local data or LLM internals. 2) Model Synergy: LLMs offer strong generalization but are hindered by their massive parameter counts on resource-limited edge devices. Meanwhile, SLMs support on-device inference but lack LLMs’ advanced reasoning and tool integration capabilities. Another key challenge is designing a collaborative framework that merges LLM generalization with SLM task-specific specialization for scalable performance. 3) Heterogeneity-aware Orchestration: Edge environments exhibit system heterogeneity, data heterogeneity, and runtime variance. The third challenge is to orchestrate collaborative training that balances accuracy, efficiency, and robustness under these dynamic constraints.

In this paper, we propose \model~, an efficient federated language model framework that harmonizes a central cloud-hosted black-box LLM with edge-deployed lightweight SLMs to enhance downstream task performance while ensuring data privacy, LLM intellectual property, and managing edge device resource constraints. As shown in Figure \ref{fig:intro}.3, \model~ operates in two phases: fine-tuning and inference. During fine-tuning, SLMs serve as the global base model, augmented by Low-Rank Adaptation (LoRA) \cite{lora,tian2024hydralora} adapters for collaborative parameter aggregation to diversify training data and expand device participation. To address hardware and runtime heterogeneity, \model~ introduces heterogeneity-aware adaptive-rank LoRA adapters, which dynamically adjust to varying resource budgets. Additionally, a parameter-free Mixture-of-Experts (MoE) router \cite{moe} mitigates data heterogeneity by intelligently routing tasks to specialized experts. In the inference phase, personalized SLMs on resource-constrained devices leverage local user profiles and activity logs, while the cloud LLM provides high-level knowledge (e.g., strategic planning, outlines, and ``dark knowledge''). The SLMs then integrate this contextual guidance to generate tailored outputs. Extensive experiments demonstrate the effectiveness and practical feasibility of \model~. To the best of our knowledge, \model~ is the \textit{first} federated language model framework across edge devices while logically protecting LLM intellectual property and user privacy through collaborative LLM-SLM generation. Overall, we highlight our contributions as follows:
\begin{itemize}
    \item We propose \model~, a federated language model framework that bridges collaboration between cloud-hosted LLMs and edge-deployed SLMs, jointly addressing intellectual property protection, data privacy, and performance enhancement.
    \item To tackle data/system heterogeneity and runtime variance in edge environments, we design heterogeneity-aware adaptive-rank LoRA adapters and a parameter-free MoE router, balancing model accuracy and training efficiency under resource constraints.
    \item We evaluate \model~ through extensive experiments on real-world edge devices, demonstrating superior accuracy and practical feasibility.
\end{itemize}

%% file: sec/03_rw.tex
\section{Related work}


\subsection{Large Language Models}
\label{Section:logit}
\subsubsection{Training Process}
Large Language Models (LLMs) \cite{GPT4, Llama, gemma} achieve remarkable performance through scaling, typically undergoing two-stage training: (1) Auto-regressive pre-training on broad public datasets (e.g., Wikipedia, web texts) to acquire general knowledge, and (2) supervised fine-tuning on domain-specific data (e.g., medical \cite{chatdoctor}, legal \cite{ChatLaw}, financial \cite{BloombergGPT}) for downstream task specialization. However, public datasets suffer from quality limitations, require extensive curation, and risk depletion by 2026 \cite{2026}, prompting reliance on privately-held data for LLM advancement. While individual parties often possess limited data, collaborative aggregation across distributed private datasets could unlock unprecedented LLM capabilities.

\begin{figure*}[!t]
    \centering
    \includegraphics[width=0.7\linewidth]{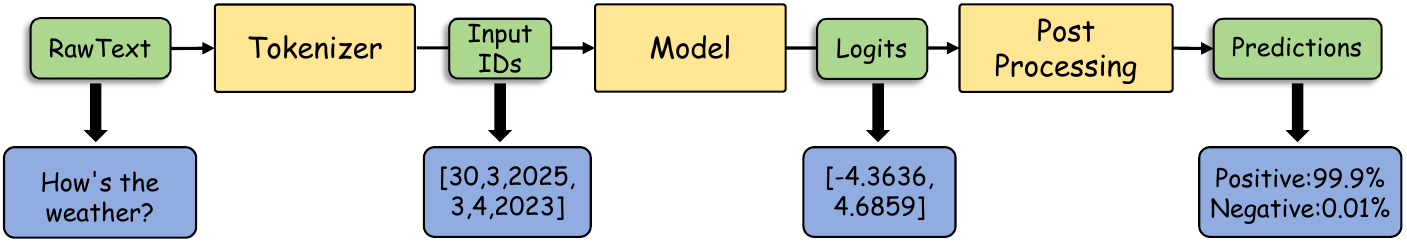}
    \caption{LLMs inference process. The logits play a crucial role in determining the final output.}
    \label{fig:llm_infer}
\end{figure*}

\subsubsection{Inference Process} The inference pipeline of LLMs follows a structured process from input to output generation \cite{logits,logits_1,clone,10682928}, as illustrated in Figure \ref{fig:llm_infer}: (1) Tokenization : Raw text is encoded into token IDs. (2) Model Inference : The model iteratively predicts next-token probability distributions. (3) Decoding : Logits are processed via greedy selection, sampling \cite{Llama}, or advanced strategies to balance diversity and quality. (4) Termination : Generation stops upon reaching a token limit or end token. While closed-source LLMs restrict parameter access, their APIs (e.g., OpenAI API \cite{logits_1}) expose logits for downstream use. Recent works \cite{logit_1,logits,logit_3} demonstrate that ensembling logits from multiple models improves performance. For instance, Contrastive Decoding \cite{logit_6} enhances outputs by selecting tokens that maximize log-likelihood gaps between expert and amateur models. This raises a critical question: \textit{Can we fine-tune LLMs during decoding by modifying logits without accessing black-box model parameters?}

\subsection{Federated Fine-tuning of LLMs}
Federated Learning (FL) \cite{fl} enables privacy-preserving collaborative model training across decentralized devices. Its standard workflow comprises four phases: (1) global model broadcasting from server to clients, (2) local client-side training, (3) model update aggregation from clients to server, and (4) server-side global model updating. To address data scarcity and enhance LLM generalization in downstream tasks, recent studies \cite{FederatedScope-LLM, Fate-llm,tian2024ranking,ning2024fedgcs} have explored FL-based LLM fine-tuning, leveraging distributed private data while maintaining data locality. 
However, existing solutions tend to decouple the issue, concentrating only on specific aspects. Effectively fine-tuning a black-box LLM across multiple devices in real-world scenarios should satisfy the following design requirements.

\begin{table}[!t]
    \centering
    \caption{Popular ML hardware specifications.}
    \resizebox{0.95\linewidth}{!}{
    \begin{tabular}{c|cccc}
    \bottomrule[1.5pt]
    \rowcolor{my_c1}  \textbf{GPU Types}   &  \textbf{Peak Perf.} &\textbf{Memory} & \textbf{Bandwidth} & \textbf{Peak Power}\\
    \toprule[0.75pt]
    \multicolumn{5}{c}{Server-level} \\ \hline
    NVIDIA  A100   & 312 TFLOPS &80GB & 1935 GB/s & 300W\\
    NVIDIA  A40   & 149.7 TFLOPS & 48 GB  & 696 GB/s & 300W\\ \hline
    \multicolumn{5}{c}{Edge-level} \\ \hline
    Jetson Orin NX & 100 TOPS & 16GB & 102.4GB/s&25W\\
    Jetson Orin Nano & 40 TOPS &8GB &68 GB/s &  15W\\
    \toprule[1.5pt]
    \end{tabular}}
    \label{tab:device}
\end{table}


\subsubsection{Memory constraints and model ownership} \label{sec:model}
Full parameter fine-tuning (FFT) of LLMs imposes prohibitive resource requirements, exemplified by LLaMA-65B’s 800GB memory demand for training \cite{Llama, lv2023full,wu2025breaking,wu2024neulite}. As shown in Table~\ref{tab:device}, edge devices face severe computational constraints compared to server-grade GPUs, rendering FFT impractical in decentralized settings. To address this challenge, parameter-efficient fine-tuning (PEFT) methods have emerged, which freeze pre-trained LLM weights and adapt only lightweight modules during federated training. Representative approaches include prompt-based tuning (e.g., PromptFL \cite{FedPrompt}, FedSP\cite{FedSP}) and LoRA (e.g., FederatedScope-LLM \cite{FederatedScope-LLM}, FedIT~\cite{FedIT}). While these techniques reduce computational overhead and communication costs, deploying LLMs \textit{during} training remains memory-intensive for edge devices. For instance, even in FP16 precision, LLaMA-65B requires 130GB for model weights, and LLaMA-7B consumes 14GB, exceeding the capacity of most edge hardware. Second, these approaches inherently require ``white-box'' access to the full model parameters to apply the LoRA adapters, making them incompatible with the proprietary, black-box LLMs that dominate the state-of-the-art.

Compression-based \cite{offsitetuning,zeroquant,tam2024fedhybrid,wu2024heterogeneity} and model-splitting \cite{ceballos2020splitnn,hu2025tightllm,tian2024breaking} approaches reduce resource demands but require white-box access to model parameters or transfer intermediate activations, thereby compromising intellectual property protection and introducing additional privacy risks.

Furthermore, these methods require access to intermediate layer parameters, failing to protect LLM intellectual property. Cryptographic solutions like FedIPR \cite{fedipr} propose watermarking-based ownership verification, but introduce communication overheads and cannot guarantee edge-device security against attacks.
As shown in Table \ref{tab:SLM}, while SLMs enable edge deployment with superior performance in specialized domains (e.g., AceMath-7B-Instruct \cite{liu2024acemath} achieves 63.37\% on MATH\cite{hendrycksmath2021}), they underperform in general tasks (BBH: 29.99\%, MMLU: 26.48\%) due to limited capacity and training data. In contrast, LLMs retain superior general knowledge but cannot be locally deployed on memory-constrained devices. \textit{This dichotomy motivates \model~, which harmonizes both architectures for lightweight, privacy-preserving, and high-performance federated language models.}

\begin{table}[!t]
    \centering
    \caption{Performance comparison of LLMs and SLMs under different benchmarks \cite{open-llm-leaderboard-v2}.}
    \resizebox{0.85\linewidth}{!}{
    \begin{tabular}{c|ccc}
    \bottomrule[1.5pt]
     \rowcolor{my_c1}   \textbf{Model} & \textbf{BBH} (\%) & \textbf{MMLU} (\%) & \textbf{MATH} (\%) \\ \toprule[0.75pt]
    AceMath-7B\cite{liu2024acemath}  & 29.99 &26.48 & \textbf{63.37} \\ \hline
    Qwen2-72B \cite{qwen2} & \textbf{57.48}&\textbf{48.92}&41.77 \\
Llama-3.1-70B\cite{grattafiori2024llama} & 55.93 &47.88 & 38.07\\     \toprule[1.5pt]
    \end{tabular}}
    \label{tab:SLM}
\end{table}

\begin{figure}[!t]
    \centering
    \includegraphics[width=0.85\linewidth]{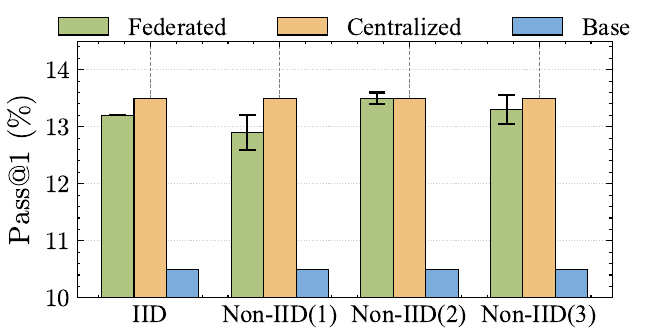}
    \caption{Model performance on in-domain dataset across various Non-IID levels. (Model: Llama-7B; Dataset: CodeAlpaca; Benchmark: Humaneval)}
    \label{fig:Data Heterogeneity}
\end{figure}

\begin{figure}[!t]
    \centering
    \includegraphics[width=0.85\linewidth]{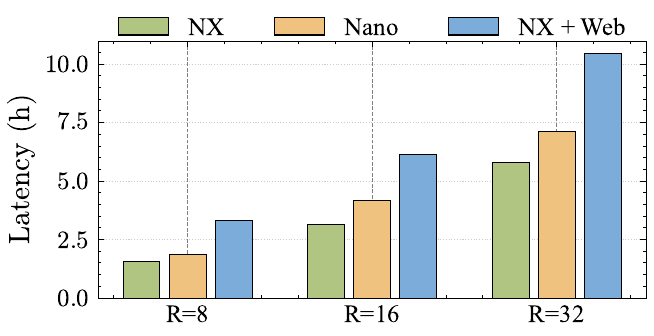}
    \caption{Latency during fine-tuning with different LoRA rank on different edge devices. (Tiny-LLaMA on GSM-8K)}
    \label{fig:rank_latency}
\end{figure}

\subsubsection{Data heterogeneity} \label{sec:data_heter} presents a significant challenge in FL \cite{non-iid,harmony}, which hinders statistical efficiency. For traditional discriminative tasks, data heterogeneity usually refers to the in-domain (ID) data distribution across different clients. To address these issues, several strategies have been proposed in traditional FL, FedProx \cite{fedprox} enhances the stability of the FL process through the addition of a proximal term. Favor \cite{Favor} employs an estimation of data distribution based on discrepancies in uploaded model weights, utilizing statistical utility (non-IID level) as a criterion for selecting a high-quality subset of devices. Recent studies \cite{dataheterogeneity} indicate that architectures based on self-attention, such as Transformers, exhibit enhanced robustness to distribution shifts, thereby improving FL performance. As illustrated in Figure \ref{fig:Data Heterogeneity}, we explore the fine-tuning of Llama-7B using CodeAlpaca \cite{FederatedScope-LLM,codealpaca} across various Non-IID levels, evaluated via the Humaneval \cite{humaneval} benchmark. We employ a Latent Dirichlet Allocation (LDA) based partition strategy to simulate data heterogeneity. The levels Non-IID (1,2,3) correspond to Dirichlet concentration parameters of $\alpha=0.5, 0.3,$ and $0.1$ respectively,
representing increasing degrees of distribution skewness derived from the LDA topic inference. This evaluation demonstrates a high robustness to in-domain heterogeneity.
Despite robustness within a single domain, enhancing the generative capabilities of LLMs in practical applications necessitates a large and diverse dataset that spans multiple domains.
To quantify cross-domain heterogeneity, we use a sentence-embedding model, $\Gamma$ (here adopting BGE \cite{bge-m3}) to evaluate the task embeddings on the multi-takes dataset Flan~\cite{flanv2}. Figure \ref{fig:Lora_embedding} presents a heatmap that visualizes the similarity of task embeddings across different tasks. The results indicate that embeddings within the same domain exhibit higher similarity, highlighting the presence of data heterogeneity across different tasks. \textit{This underscores the need for \model~ to reconcile two priorities: (1) leveraging cross-domain commonalities through federated learning, and (2) accommodating device-specific characteristics via personalization, thereby mitigating data heterogeneity’s impact on model performance and efficiency.}

\begin{figure*}[!ht]
    \centering
    \begin{minipage}[t]{0.3\linewidth}
        \centering
        \includegraphics[width=\linewidth]{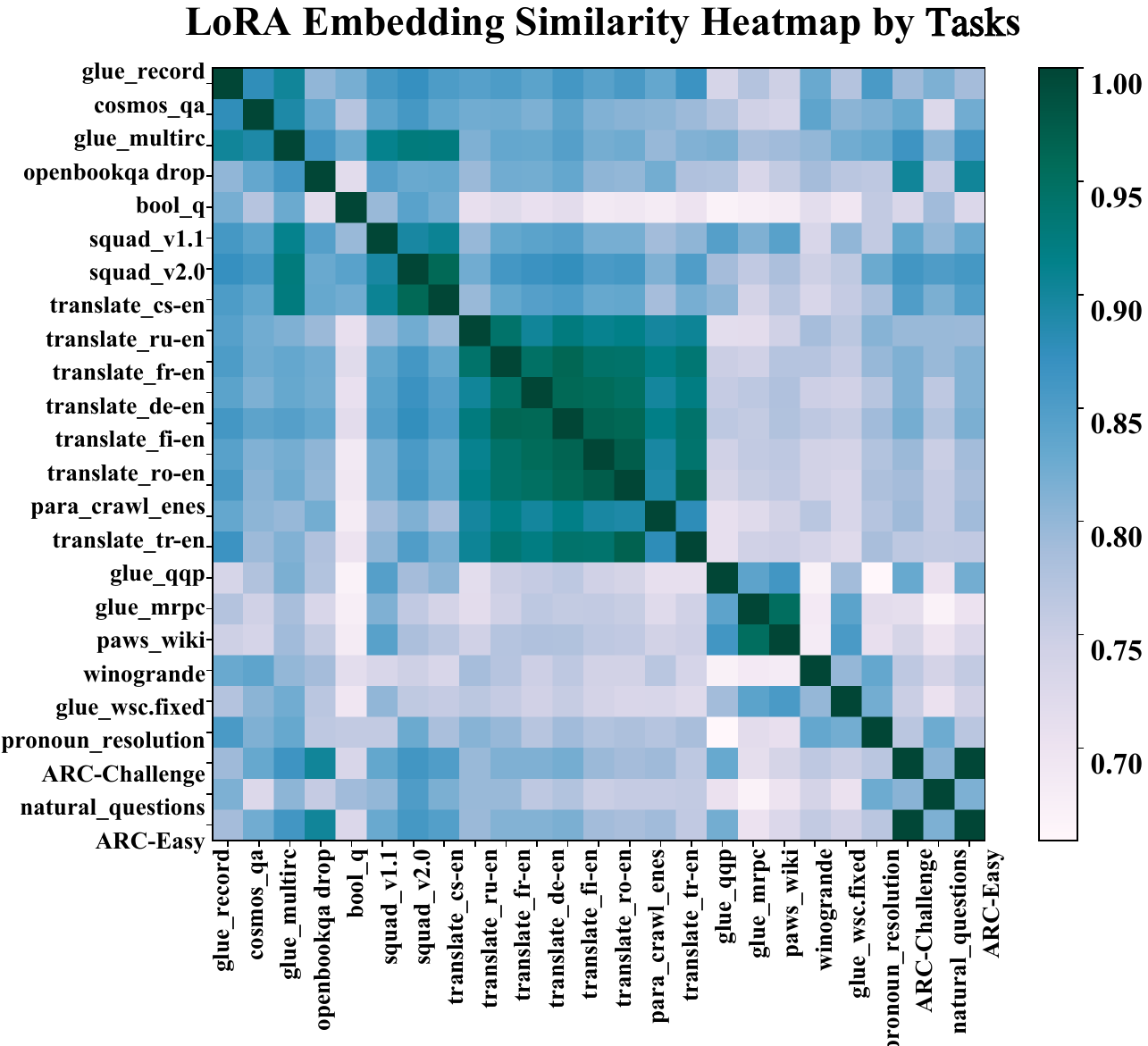}
        \caption{Task embedding heatmap.}
        \label{fig:Lora_embedding}
    \end{minipage}
    \hfill
    \begin{minipage}[t]{0.65\linewidth}
        \centering
    \includegraphics[width=\linewidth]{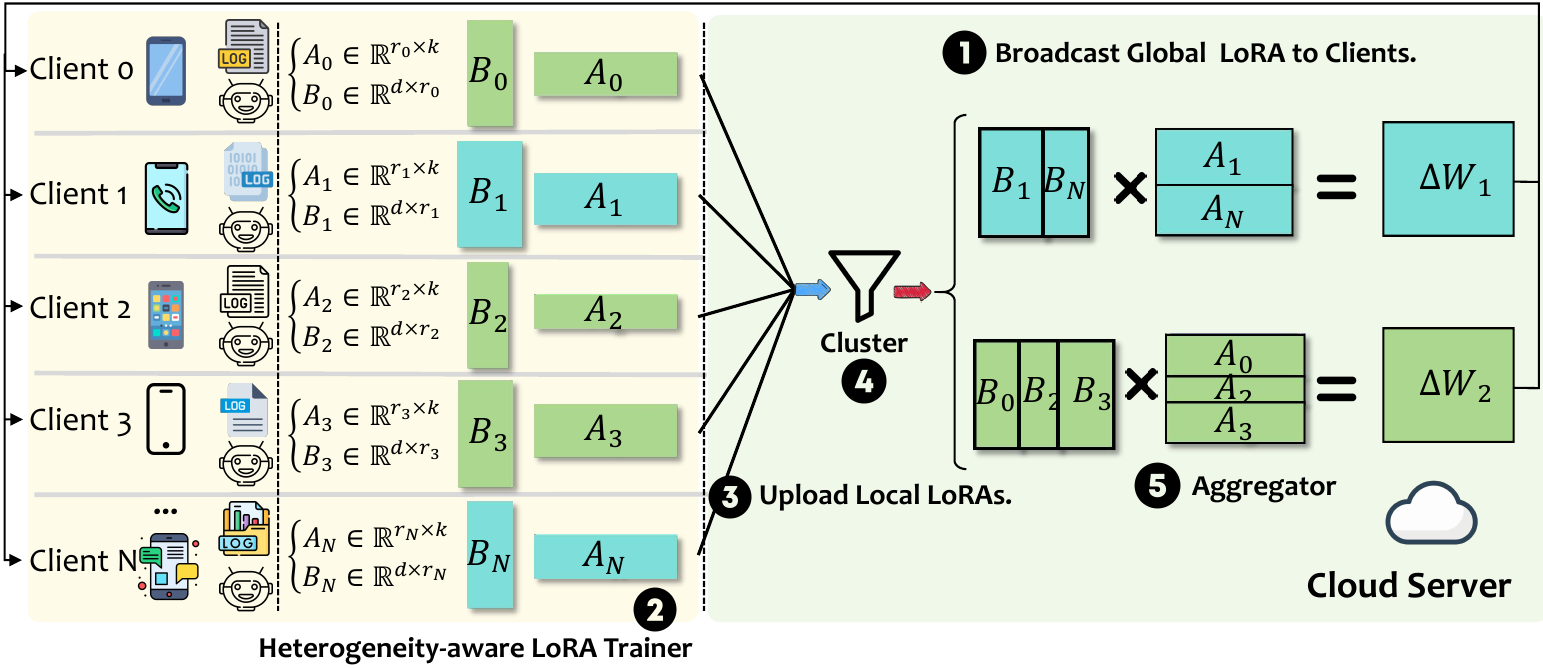}
        \caption{Workflow of \model~ fine-tuning process.}
        \label{fig:overview}
    \end{minipage}
\end{figure*}

\subsubsection{System heterogeneity} arising from performance and efficiency variations across diverse hardware platforms, is especially pronounced in edge environments with thousands of distinct System-on-Chips that exhibit heterogeneous compute and storage capabilities.
Profiling results (Figure \ref{fig:rank_latency}) reveal three observations: (1) hardware disparities significantly affect latency; (2) foreground workloads exacerbate runtime variance; and (3) LoRA rank selection introduces non-trivial computational trade-offs. These findings motivate our heterogeneity-aware rank adaptation.
Traditional approaches \cite{Oort, harmony, FL_selection} address system heterogeneity through device selection strategies but fail to reduce per-device training overhead. For LLMs, high memory demands further restrict eligible device pools. Recent methods \cite{ZeroFL, FedMef} propose dynamic model compression: ZeroFL \cite{ZeroFL} partitions weights into active/inactive subsets for inference and sparse updates during backpropagation, while FedMef \cite{FedMef} preserves post-pruning accuracy via budget-aware parameter extrusion. However, as discussed in Section \ref{sec:model}, compression introduces additional computational overhead, degrades domain-specific knowledge, and undermines privacy through intermediate parameter exposure. In contrast, SLMs mitigate these challenges by streamlining architectures and training data to focus on core task requirements. \textit{This motivates \model~ to develop an adaptive heterogeneity-aware mechanism that dynamically accommodates system heterogeneity and runtime variance while maintaining model performance and privacy.}

%% file: sec/04_obs.tex
\section{\model~: Fine-tuning Phase}

\label{Sect_3}
We now discuss how we leverage the derived principles to design a high-performance federated language model fine-tuning framework. Specifically, we first introduce the architecture and overall workflow of \model~. After that, we discuss each component in detail.

\subsection{System Overview}
Figure \ref{fig:overview} depicts \model~’s federated fine-tuning architecture, which adopts a client-server design with specialized small language models (SLMs) deployed on edge devices (data owners). The process comprises four key stages: \ding{202} To minimize tuning overhead and communication costs, LoRA modules are used as the collaborative fine-tuning component. \ding{203} Heterogeneity Adaptation: Each device freezes its SLM parameters and dynamically adjusts its LoRA trainer within local memory/latency constraints to accommodate hardware and runtime variance. \ding{204} Model Upload: After local training, devices upload their personalized LoRA modules (encoding task-specific adaptations) to the central server. \ding{205} Hierarchical Aggregation: The server clusters LoRA modules based on embedding similarity, and aggregates them via a task-specific aggregator to update the global LoRA module.

\subsection{Heterogeneity-aware LoRA Trainer}
\label{Pruner}
The participating devices in federated fine-tuning often exhibit heterogeneous hardware configurations (e.g., varying compute capabilities and memory budgets) and runtime variances. 
Consequently, a uniform configuration cannot efficiently accommodate heterogeneous devices. While 1B-3B SLMs are classified as 'small' relative to cloud LLMs, performing FFT on them remains infeasible for edge hardware due to excessive memory overheads. Aligning with LoRA's resource-dependent design philosophy, we employ a heterogeneity-aware LoRA trainer to bridge this gap, ensuring feasibility and maximizing participation across constrained edge devices.
In vanilla LoRA, low-rank matrices A (initialized with Kaiming Uniform \cite{he2015delving}) and B (zero-initialized) are sequentially applied to each layer to adapt residual weights. The forward computation is defined as:
\begin{equation}
    y\prime=y + \Delta y = Wx + \Delta Wx =Wx + BAx,   
    \label{eq:lora}
\end{equation}
where $y \in  \mathbb{R}~\textsuperscript{d}$ is the output and $x\in  \mathbb{R}~\textsuperscript{k}$ denote the input. $B \in  \mathbb{R}~\textsuperscript{d×r}, A \in  \mathbb{R}~\textsuperscript{r×k}$ with $r \ll min(d, k)$. $W$ represents the frozen model parameter. To address device heterogeneity, we dynamically configure LoRA ranks (\( r \)) for each device. Given \( N \) devices, we define a set of LoRA modules \( \Phi = \{\phi_1, \phi_2, \dots, \phi_N\} \), where each \( \phi_i \) is trained on local data. For device \( i \), the LoRA rank \( r_i \) is determined by:  
\begin{equation}  
    \left\{  
    \begin{array}{l}  
        A_i \in \mathbb{R}^{r_i \times k} \\  
        B_i \in \mathbb{R}^{d \times r_i},  
    \end{array}  
    \right.  
\end{equation}  
subject to two constraints: 1) Memory Budget: \( M_{r,i} \leq \mathcal{B}_i \) (device \( i \)'s memory capacity). 2) Latency Bound: Predicted training latency \( t_{r,i} \leq \mathcal{T} \) (global round deadline to mitigate straggler effects \cite{Oort,harmony}). The function \textit{PredictLatency(r)} is implemented as a look-up table (LUT) based on the offline profiling data. This allows the trainer to directly retrieve the pre-measured training duration for the specific device and rank, ensuring precise deadline compliance.
To intuitively illustrate the dynamic rank adaptation, we present the selection decision boundaries in Figure \ref{fig:rank_decision}. This adaptive design ensures efficient resource utilization while maintaining training progress synchronization across heterogeneous edge devices. It is orthogonal to compression-like methods, enabling further reduction in additional parameters and computation overhead.
To regulate local data leakage, Floe can employ configurable Local Differential Privacy \cite{abadi2016deep,geyer2017differentially,wei2020federated}. Gaussian noise is injected into the aggregated updates to satisfy differential privacy. The perturbed gradient is given by $\tilde{g} = \bar{g} + \mathcal{N}(0, \sigma_{noise}^2 C^2 \mathbf{I})$, where $\mathcal{N}$ denotes the Gaussian distribution. Here, $C$ is the gradient clipping threshold used to bound the sensitivity, $\sigma_{noise}$ is the noise multiplier, and $\mathbf{I}$ is the identity matrix, ensuring independent noise injection across parameters. This formulation follows the standard DP-SGD protocol \cite{abadi2016deep,geyer2017differentially}. Higher noise ($\sigma$) yields stronger privacy but requires budget tuning to mitigate accuracy loss.

\begin{figure}[!t]
    \centering
    \includegraphics[width=\linewidth]{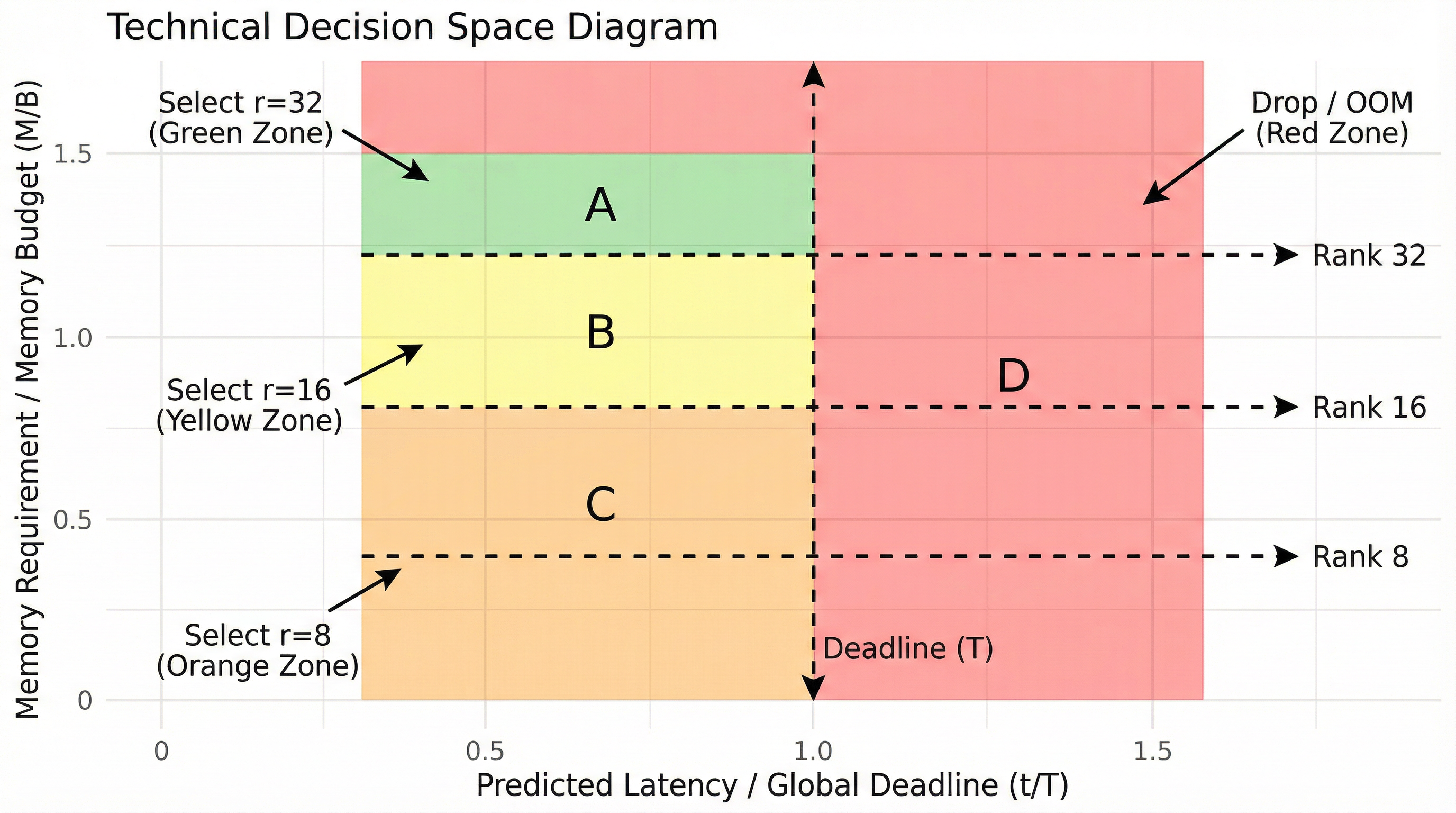}
    \caption{Illustration of Heterogeneity-aware Rank Selection.}
    \label{fig:rank_decision}
\end{figure}

\begin{algorithm}
\caption{Heterogeneity-Aware Rank Selection}
\label{alg:rank_selection}
\begin{algorithmic}[1] 
{\color{black}\STATE \textbf{Input:}
 $\mathcal{R}_{set}$: Available LoRA ranks;
 $\mathcal{T}$: Global latency deadline;
 $\GetAvailableMemory()$: Gets current free memory;
 $\PredictMemory(r)$: Predicts memory use for rank $r$;
$\PredictLatency(r)$: Predicts latency for rank $r$.
\STATE \textbf{Output:} $r_{selected}$: The optimal LoRA rank for the current round.
\STATE \textbf{Function} \SELECTRANK()
    \STATE $r_{selected} \leftarrow None$
    \STATE $\mathcal{B}_i \leftarrow \GetAvailableMemory()$ 
    \FOR{$r$ \textbf{in} $sorted(\mathcal{R}_{set}, \text{reverse}=\text{true})$}
        \STATE $M_{r,i} \leftarrow \PredictMemory(r)$
        \STATE \textbf{/* Stage 1: check memory constraint */}
        \IF{$M_{r,i} \le \mathcal{B}_i$}
            \STATE $t_{r,i} \leftarrow \PredictLatency(r)$
            \STATE \textbf{/* Stage 2: check latency constraint */} 
            \IF{$t_{r,i} \le \mathcal{T}$}
                \STATE $r_{selected} \leftarrow r$
                \RETURN $r_{selected}$ 
            \ENDIF
        \ENDIF
    \ENDFOR    
    \RETURN $r_{selected}$ 
\STATE \textbf{End Function}}
\end{algorithmic}
\end{algorithm}

\subsection{Task-specific Router/Aggregator}
To address cross-task heterogeneity under privacy and resource constraints, we design a task-specific LoRA aggregation strategy that extends beyond naive averaging. Each LoRA module $\phi_i$ is first projected into a task-aware embedding space using a domain-conditioned encoder $E(\phi_i)$, capturing both its adaptation semantics and fine-tuning dynamics. This embedding reflects not only parameter updates but also behavioral traits relevant to downstream generation, which are critical for privacy-sensitive, personalized tasks executed on edge devices. We then compute cosine similarities between embeddings to quantify functional divergence and form semantically coherent clusters:
\begin{equation}
s_{i,j} = \cos\left(E(\phi_i), E(\phi_j)\right).
\end{equation}
To quantify functional divergence and form semantically coherent groups, we adopt a clustering-based aggregation mechanism. We first apply a k-means clustering algorithm to the embeddings $E(\phi)$ of all $N$ client modules. This partitions the modules into $M$ clusters, where each cluster represents a distinct, discovered task. The optimal number of clusters, $M$, is not a fixed setting. It is determined dynamically in each aggregation round by analyzing the cluster quality using the silhouette score. This enables the server to adapt to the actual task distribution among the participating clients. Once the $M$ clusters are formed, we proceed with task-specific aggregation. Considering $M$ distinct downstream tasks, $\mathcal{T} = {T_1, T_2, \ldots, T_M}$, and assuming that $k$ LoRAs correspond to the same task $T_t$ based on their similarity scores, we represent the parameters of each LoRA $\phi_i$ by $\Theta_i$. The collective parameter for task $T_t$ is then computed as the average of these parameters:
\begin{equation}
\Theta_{T_j} = \frac{1}{k} \sum_{i=1}^{k} \Theta_i, \quad \phi_i \in T_t.
\label{eq:param}
\end{equation}
Unlike traditional federated methods that enforce a single shared model across diverse clients, this clustered aggregation preserves per-task specialization while mitigating negative transfer, especially under non-IID conditions and misaligned user intents (see Section~\ref{sec:data_heter}). As visualized in Figure~\ref{fig:overview}, this method produces efficient, task-specific updates well-suited for federated LLM fine-tuning under privacy constraints.

While Floe currently utilizes a deadline-based synchronous protocol, the framework naturally supports asynchronous extension to handle extreme device unavailability. 
Unlike global model aggregation which blocks for all clients, Floe's \textit{Task-Specific Aggregator} allows for \textit{Cluster-wise Asynchronous Updates}. The server can update a specific cluster center $\Theta_{T_j}$ immediately upon receiving a local LoRA $\phi_i$ that maps to task $T_j$, independent of pending updates for other clusters. 
To further mitigate the impact of stragglers uploading outdated parameters (``stale'' updates), a \textit{staleness-aware} weighting mechanism is introduced. Let $\tau_{m}=t_{now}-t_{m}^{start}$ denote the time lag of client $m$. We replace the uniform averaging in Eq. \ref{eq:param} with a time-decayed weighted aggregation:
\begin{equation}
\label{eq:async}
\Theta_{T_{j}}^{async}=\frac{\sum_{m=1}^{k}exp(-\beta\tau_{m})\cdot\Theta_{m}}{\sum_{m=1}^{k}exp(-\beta\tau_{m})}
\end{equation}
where $\beta$ is a decay hyperparameter. This ensures that fresher updates contribute significantly to the global expert pool, while slower devices can still participate with reduced influence, preventing model degradation from obsolete gradients.

\subsection{Convergence Analysis} 
We provide a theoretical guarantee for Floe's training process. Since Floe employs a clustering-based aggregation strategy, we analyze the convergence within a specific task cluster $\mathcal{C}$ with $K$ clients. We model the heterogeneity-aware LoRA update as a compressed gradient estimator.
\subsubsection{Assumptions}
We adopt standard assumptions for non-convex Federated Learning:
\begin{itemize}
    \item \textbf{Assumption 1 (L-Smoothness):} $F_k$ is $L$-smooth: $\|\nabla F_k(\mathbf{w}) - \nabla F_k(\mathbf{v})\| \le L \|\mathbf{w} - \mathbf{v}\|$.
    \item \textbf{Assumption 2 (Bounded Variance):} Stochastic gradients are unbiased with bounded variance $\sigma^2$: $\mathbb{E}[\|\nabla F_k(\mathbf{w}; \xi) - \nabla F_k(\mathbf{w})\|^2] \le \sigma^2$.
    \item \textbf{Assumption 3 (Bounded Dissimilarity):} The non-IID degree is bounded by $\kappa^2$: $\|\nabla F_k(\mathbf{w}) - \nabla F(\mathbf{w})\|^2 \le \kappa^2$.
    \item \textbf{Assumption 4 (Rank Approximation Error):} The adaptive LoRA rank $r_k$ acts as a compression operator $Q_{r_k}(\cdot)$. The approximation error is bounded by $\delta$: $\mathbb{E}[\|\mathbf{g} - Q_{r_k}(\mathbf{g})\|^2] \le (1 - \delta) \|\mathbf{g}\|^2$, where $0 < \delta \le 1$.
\end{itemize}
\subsubsection{Main Result}
Let $\eta$ be the learning rate and $E$ be the local steps. The global model updates via $\mathbf{w}_{t+1} = \mathbf{w}_t - \eta \sum_{k=1}^K p_k \Delta \mathbf{w}_t^k$.
\begin{theorem}
\label{thm:convergence}
(Convergence). Under Assumptions 1-4, if $\eta \le \frac{1}{30LE}$, the average squared gradient norm after $T$ rounds is bounded by:
\begin{equation}
\begin{split}
    \frac{1}{T} \sum_{t=0}^{T-1} \mathbb{E} [\|\nabla F(\mathbf{w}_t)\|^2] & \le \frac{2[F(\mathbf{w}_0) - F^*]}{\eta E T} \\
    & \quad + \mathcal{O}\left( \frac{\sigma^2}{K E} + \eta^2 L^2 E (\kappa^2 + \sigma^2) \right) \\
    & \quad + \mathcal{O}\left( (1-\delta) \sigma^2 \right)
\end{split}
\end{equation}
\end{theorem}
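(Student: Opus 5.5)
The approach is the standard one-round descent analysis for local SGD / FedAvg with non-IID clients, following the Karimireddy et al. and Wang et al. style. Its single new ingredient is the compression error from Assumption~4. Within the cluster $\mathcal{C}$, each client $k$ performs $E$ local steps $\mathbf{w}_{t,s+1}^k=\mathbf{w}_{t,s}^k-\eta\, Q_{r_k}(\mathbf{g}_{t,s}^k)$, where $\mathbf{g}_{t,s}^k=\nabla F_k(\mathbf{w}_{t,s}^k;\xi)$, and we set $\Delta\mathbf{w}_t^k=\sum_{s<E}Q_{r_k}(\mathbf{g}_{t,s}^k)$. I will define the virtual average $\bar{\mathbf{w}}$ and the client drift
\[
\Xi_t=\tfrac{1}{E}\sum_{s}\sum_k p_k\,\mathbb{E}\|\mathbf{w}_{t,s}^k-\mathbf{w}_t\|^2 .
\]

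The first step applies $L$-smoothness (Assumption~1) to $F$ between $\mathbf{w}_t$ and $\mathbf{w}_{t+1}$:
\[
\mathbb{E}F(\mathbf{w}_{t+1})\le F(\mathbf{w}_t)-\eta\,\mathbb{E}\langle\nabla F(\mathbf{w}_t),\textstyle\sum_k p_k\Delta\mathbf{w}_t^k\rangle+\tfrac{L\eta^2}{2}\mathbb{E}\|\textstyle\sum_k p_k\Delta\mathbf{w}_t^k\|^2 .
\]
I split each compressed stochastic gradient as
\[
Q_{r_k}(\mathbf{g})=\nabla F_k(\mathbf{w}_t)+[\nabla F_k(\mathbf{w}^k_{t,s})-\nabla F_k(\mathbf{w}_t)]+[\mathbf{g}-\nabla F_k(\mathbf{w}^k_{t,s})]+[Q_{r_k}(\mathbf{g})-\mathbf{g}].
\]
Summed over $k$ with weights $p_k$, the first term gives $\nabla F(\mathbf{w}_t)$. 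In the inner product, Young's inequality then yields three contributions. The main term is $-\tfrac{\eta E}{2}\|\nabla F(\mathbf{w}_t)\|^2$. The drift term is bounded by $\eta E L^2\Xi_t$ via Assumption~1. The compression term is bounded by $\eta E\,\mathbb{E}\|Q-\mathbf{g}\|^2\le\eta E(1-\delta)\mathbb{E}\|\mathbf{g}\|^2$ via Assumption~4. The noise term has zero mean by the unbiasedness in Assumption~2, so it drops out of the inner product. In the quadratic term, the noise variance averages over $K$ independent clients to give $\sigma^2 E/K$, while the remaining pieces are absorbed using $\eta\le 1/(30LE)$.

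The second step is the drift lemma. Unrolling local steps and using $\|\nabla F_k(\mathbf{w}_t)\|^2\le 2\kappa^2+2\|\nabla F(\mathbf{w}_t)\|^2$ (Assumption~3), I aim for
\[
\Xi_t\le c_1\eta^2E^2(\sigma^2+\kappa^2)+c_2\eta^2E^2\|\nabla F(\mathbf{w}_t)\|^2 .
\]
The recursion $(1+\tfrac{1}{E-1})^{E}\le e$ closes under the step-size condition, and the constant $30$ ensures $c_2 L^2\eta^2E^2\le 1/4$. Substituting this into the first step gives a per-round inequality of the form
\[
\tfrac{\eta E}{4}\|\nabla F\|^2\le F(\mathbf{w}_t)-\mathbb{E}F(\mathbf{w}_{t+1})+\eta E\big[\mathcal{O}(\sigma^2/(KE))\cdot\eta LE+\eta^2L^2E(\kappa^2+\sigma^2)+(1-\delta)\sigma^2\big].
\]
Telescoping over $T$ rounds, dividing by $\eta ET$, and using $F\ge F^*$ then gives the stated bound. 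The leading constant comes out as $2$ (rather than $4$) after a sharper Young split, and the constants are hidden in the $\mathcal{O}$ terms.

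I expect the main obstacle to be the compression term. Assumption~4 controls $\|\mathbf{g}-Q(\mathbf{g})\|^2$ only relative to $\|\mathbf{g}\|^2$, and
\[
\mathbb{E}\|\mathbf{g}\|^2\le\sigma^2+2\kappa^2+2\|\nabla F(\mathbf{w}_t)\|^2+2L^2\|\mathbf{w}^k_{t,s}-\mathbf{w}_t\|^2
\]
contains gradient-norm and $\kappa^2$ pieces that the stated bound does not display. My first attempt is to absorb the $(1-\delta)\|\nabla F\|^2$ piece into the descent term, which requires $1-\delta$ to be bounded away from $1$ (say $\delta\ge 1/2$, or else a constant of order $\delta$ appears). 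The $\kappa^2$ piece I would fold into the $\mathcal{O}$ constants, at the cost of noting that the displayed $(1-\delta)\sigma^2$ really stands for $(1-\delta)(\sigma^2+\kappa^2)$. If that step is too lossy, the fallback is to interpret Assumption~4 as applying to the noise component only, for example via a biased but contractive compressor combined with error feedback. Under that reading the residual is exactly $\mathcal{O}((1-\delta)\sigma^2)$ and does not vanish with $T$, which matches the non-vanishing floor in the theorem.
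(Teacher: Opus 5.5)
Your plan has the same skeleton as the paper's proof, which is only a sketch. The paper writes the smoothness expansion with one rank-error term $\tfrac{\eta L}{2}\mathbb{E}\|\sum_k p_k(\mathbf g_t^k-Q_{r_k}(\mathbf g_t^k))\|^2$, asserts that this term splits into $\sigma^2$, $\kappa^2$ and $(1-\delta)$ pieces, and telescopes. Once you fill in the details, there is a step you did not flag. Your drift lemma gives $\Xi_t\le c_1\eta^2E^2(\sigma^2+\kappa^2)+\dots$, and the inner product contributes $\eta EL^2\Xi_t$. So the per-round inequality must contain $\eta E\cdot\eta^2L^2E^2(\kappa^2+\sigma^2)$, but your bracket shows $\eta^2L^2E(\kappa^2+\sigma^2)$: a factor of $E$ has been silently dropped. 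For $\sigma^2$ this is repairable: in the drift recursion, treat $\mathbf g-\nabla F_k$ as a martingale difference (contributing $\eta^2E\sigma^2$), and route the non-mean-zero remainder $Q_{r_k}(\mathbf g)-\mathbf g$ into the $(1-\delta)$ floor using $\eta^2L^2E^2\le 1/900$. For $\kappa^2$ it is not repairable, because client bias accumulates coherently over local steps and the $E^2$ is tight. Take $\delta=1$, $\sigma=0$, two clients with $p_k=\tfrac12$, $F_1(w)=\tfrac L2w^2$ and $F_2(w)=\tfrac L4(w-D)^2$, with $F_1'$ continued with slope $L/2$ outside $[-D,2D]$ so that Assumption 3 holds with $\kappa=3LD/4$. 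Iterates started in $[0,D]$ stay there, and the round map is an affine contraction. Its fixed point is $w^\dagger=\sum_kc_ku_k/\sum_ku_k$, where $u_k=1-(1-\eta a_k)^E$, $(a_1,c_1)=(L,0)$ and $(a_2,c_2)=(L/2,D)$. It satisfies $|\nabla F(w^\dagger)|\approx(E-1)\eta L^2D/24$. The left side of the theorem therefore tends to about $(E-1)^2\eta^2L^2\kappa^2/324$. This exceeds $C\eta^2L^2E\kappa^2$ once $E$ is large compared with the hidden constant, even at the admissible step $\eta=1/(30LE)$. What your argument actually delivers is the standard local-SGD term $\mathcal{O}\bigl(\eta L\sigma^2/K+\eta^2L^2(E\sigma^2+E^2\kappa^2)\bigr)$.

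Your worry about Assumption 4 is also correct, and it too is a defect of the statement rather than of your method. Take $\sigma=0$ and $F_{1,2}(\mathbf w)=aw_1\pm bw_2+\tfrac L2\|\mathbf w\|^2$ with $b=2a$. Let $Q$ keep the larger coordinate; this is the rank-one truncation of $\mathrm{diag}(g_1,g_2)$ and satisfies Assumption 4 with $\delta=\tfrac12$. From $\mathbf w_0=\mathbf 0$, each client moves only its second coordinate, and the two moves cancel. Hence $\mathbf w_t\equiv\mathbf 0$ and $\|\nabla F\|^2=a^2$, while the stated right side tends to $0$ as $\eta\to0$ with $\eta T\to\infty$. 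So your ``first attempt'' is not lossy but forced: the floor must be $(1-\delta)(\sigma^2+\kappa^2)$, and here indeed $a^2\le\tfrac12b^2$. You can absorb the $(1-\delta)\|\nabla F\|^2$ piece for every $\delta>0$, with no restriction such as $\delta\ge\tfrac12$. Use Minkowski, $\sqrt{\mathbb E\|\mathbf g\|^2}\le\sigma+\kappa+\|\nabla F(\mathbf w_t)\|+L\sqrt{\mathbb E\|\mathbf w^k_{t,s}-\mathbf w_t\|^2}$, instead of squared triangle inequalities (whose constants are $3$ or $4$, not $2$). The price is a $\delta$-dependent leading constant, and that dependence cannot be avoided. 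A single client running gradient descent on a quadratic with $Q(\mathbf g)=(1-\sqrt{1-\delta})\mathbf g$ violates the leading term $2[F(\mathbf w_0)-F^*]/(\eta ET)$ whenever $\delta<3/4$, so no sharper Young split recovers exactly $2$. Your two fallbacks change the hypotheses: error feedback changes the algorithm, and the noise-only reading changes Assumption 4. With the corrections above, your plan proves a corrected theorem, not the one stated.
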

\textit{Proof:}
We start with the $L$-smoothness expansion of the global objective function $F(\mathbf{w}_{t+1})$. By substituting the aggregated LoRA update rule:
\begin{equation}
\begin{split}
    \mathbb{E}[F(\mathbf{w}_{t+1})] & \le \mathbb{E}[F(\mathbf{w}_t)] - \frac{\eta}{2} \mathbb{E}\|\nabla F(\mathbf{w}_t)\|^2 \\
    & \quad + \frac{\eta L}{2} \underbrace{\mathbb{E} \left\| \sum_{k=1}^K p_k (\mathbf{g}_t^k - Q_{r_k}(\mathbf{g}_t^k)) \right\|^2}_{\text{Rank Approx. Error}}
\end{split}
\end{equation}
The error term decomposes into the standard stochastic variance $\sigma^2$, the heterogeneity drift $\kappa^2$, and the LoRA rank approximation error $(1-\delta)$. Floe's \textit{Task-Specific Aggregator} explicitly minimizes $\kappa^2$ by clustering clients with similar gradients, while the heterogeneity-aware trainer trades off the rank error $(1-\delta)$ to ensure system participation. Summing over $T$ rounds yields the bound.
\hfill $\square$

\begin{figure}[!t]
    \centering
        \includegraphics[width=0.95\linewidth]{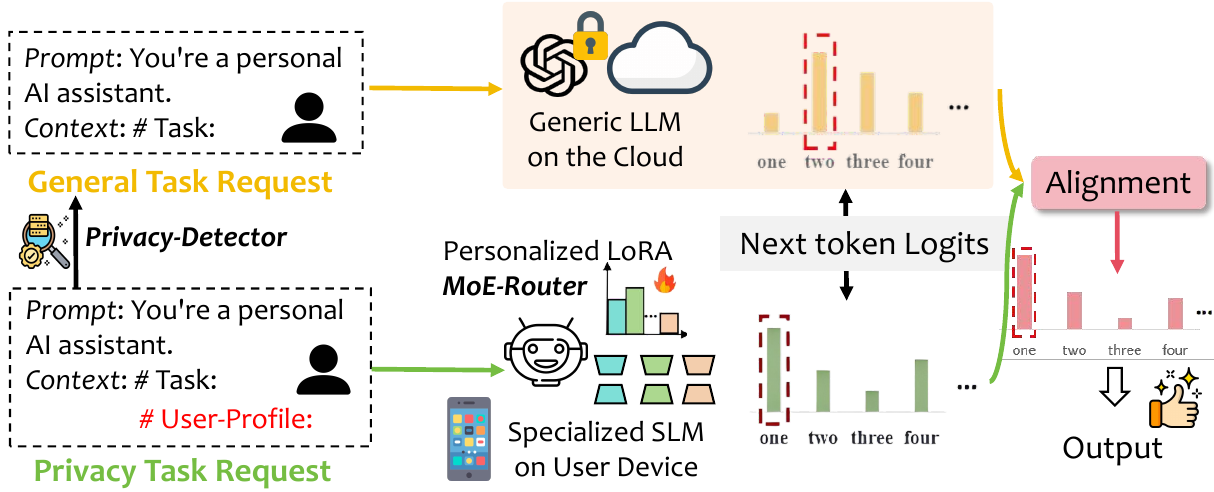}
        \caption{Illustration of \model~ inference process.}
        \label{fig:inference}
\end{figure}

\section{\model~: Inference Phase}
This section describes the \model~ inference pipeline, which seamlessly integrates: Cloud-hosted LLMs leverage their broad generalization capabilities for high-level knowledge, and on-device SLMs specialize in privacy-sensitive user instructions via local context adaptation. To optimize performance in heterogeneous task environments, we propose a parameter-free MoE router that dynamically merges LoRA modules, enabling adaptive task routing without introducing additional trainable parameters.

\subsection{Privacy Detector}
To ensure that sensitive prompts never reach the cloud LLM,  \model~ deploys a lightweight on-device Privacy Detector. Executing ahead of the router in Figure \ref{fig:inference}, it decides—within sub-millisecond latency—whether a request must remain on the local SLM.
The detector is a two-stage heuristic tuned for speed and coverage. Stage 1: applies rule-based screening: regular expressions catch numeric identifiers (phone numbers, credit-card numbers, national IDs) while a compact named-entity list flags tokens tied to health, finance, location, or family relations. the privacy flag is set to true, and the prompt bypasses the cloud.
Prompts that clear Stage 1 enter Stage 2, a semantic back-off. The text is embedded with the BGE-base model \cite{bge-m3}; cosine similarity is then computed against five centroids covering location and user-profile domains. Scores above the tuned threshold $\tau$ likewise trigger the privacy flag, capturing paraphrases and obfuscated wording that rules miss.

\begin{algorithm}
\caption{Privacy\,Detector}
\label{alg:privacy}
\KwIn{Prompt $\mathbf{x}$}
\KwOut{Privacy (\textbf{true} if $\mathbf{x}$ must \emph{not} be sent to the cloud)}
\smallskip
\SetKwProg{Fn}{Function}{:}{}
\Fn{\textsc{Detect}$(\mathbf{x})$}{
    \textbf{/* Stage 1: rule-based filter */} \\
    \If{\,\textsc{RegexMatch}$(\mathbf{x})$ \textbf{or} \textsc{NERMatch}$(\mathbf{x})$}{
        \Return \textbf{true} \tcp*{keep on device}
    }
    \textbf{/* Stage 2: semantic back-off */} \\
    $\boldsymbol{e} \leftarrow \Gamma(\mathbf{x})$ \tcp*{BGE-base embedding}
    \ForEach{domain centroid $\boldsymbol{\mu}_{j}\!\in\!\{\textit{health},\textit{finance},\textit{legal},\textit{location},\textit{profile}\}$}{
        $s_{j} \leftarrow \cos(\boldsymbol{e},\,\boldsymbol{\mu}_{j})$
    }
    \If{$\displaystyle \max_{j} s_{j} > \tau$}{
        \Return \textbf{true}
    }
    \Return \textbf{false}
}
\end{algorithm}

\subsection{Prompt-wise MoE-based Router Plugin}
In practical settings, the assumption that a single, universal solution—such as directly averaging the LoRA modules for all downstream tasks—will suffice is inadequate. This inadequacy arises because end-user requests span a diverse range of prompts, each associated with distinct tasks. Consequently, as shown in Figure \ref{fig:design_moe}, an MoE \cite{moe} router has been developed to dynamically and effectively merge LoRA modules for each mixed-task prompt. This approach significantly extends the plug-and-play capabilities of \model~.
An MoE layer comprises a router network, denoted as $R$, and a set of experts, $\textbf{E} = (E_{1}, E_2, \ldots, E_{N})$, where each expert $E_{i}$ corresponds to a tuned LoRA module. Additionally, we define $\textbf{A}=(A_1, A_2, \ldots, A_N)$ and $\textbf{B}=(B_1, B_2, \ldots, B_N)$ as the sets of matrices, each containing $N$ LoRAs.
For an input $x_{i}$, the output $y_i$ is expressed as:
\begin{align}
    y_i &= \mathbf{W} x_i + \sum_{j=1}^N \omega_{ij} \cdot E_j\left(x_i\right)  \notag \\
    &= \mathbf{W} x_i + \sum_{j=1}^N \omega_{ij} \cdot B_j A_j x_i,
\end{align}
where $\omega_{ij}$ modulates the contribution weights of each expert.
Unlike conventional MoE architectures that rely on trainable gating networks (e.g., GShard \cite{GShard}, Switch Transformer \cite{Switch}), we propose a parameter-free, prompt-driven router tailored for edge deployment. This design eliminates additional training and synchronization overhead, enabling efficient specialization without compromising latency or memory budgets. Given a mixed-task prompt $x$, we compute its embedding $\Gamma(x)$ using a sentence encoder (BGE). To enable this comparison, each aggregated expert LoRA module $\phi$ (from Section III.C) must also be represented in this embedding space. This is achieved via a one-time, pre-calculation on the server. The server computes a static embedding for each expert $\phi$ by averaging the embeddings of $k$ server-held, non-private representative samples ($k_{i, \phi}$) that define that expert's domain (e.g., generic, public medical questions for a ``medical'' expert):
\begin{equation}
\Gamma(\phi) = \frac{1}{k} \sum_{i=1}^{k} \Gamma(k_{i \cdot \phi}).
\end{equation}
This static, pre-computed embedding $\Gamma(\phi)$, which contains no private user data, is then distributed to the edge devices to be used by the router. The router calculates cosine similarity between the prompt and each expert:
\begin{equation}
s(x, \phi) = \cos(\Gamma(x), \Gamma(\phi)),
\end{equation}
and applies a softmax to derive the gating weights:
\begin{equation}
\Omega = \text{softmax}(\mathbf{s}_x).
\end{equation}
This method has three key advantages for real-world deployment: (1) No communication overhead—gate weights need not be learned or synchronized;
(2) Low latency—matching completes in under 1 ms on Jetson-class devices;
(3) Interpretability—each LoRA expert aligns with a known domain (e.g., medical, legal), supporting transparent, domain-specific control.
Moreover, when downstream tasks evolve or new domains emerge, traditional MoE requires expensive retraining of the router. In contrast, our soft MoE enables flexible plug-and-play updates—experts can be added, removed, or replaced without modifying or retraining the core routing mechanism.

\begin{figure}[!t]
        \centering
        \includegraphics[width=0.65\linewidth]{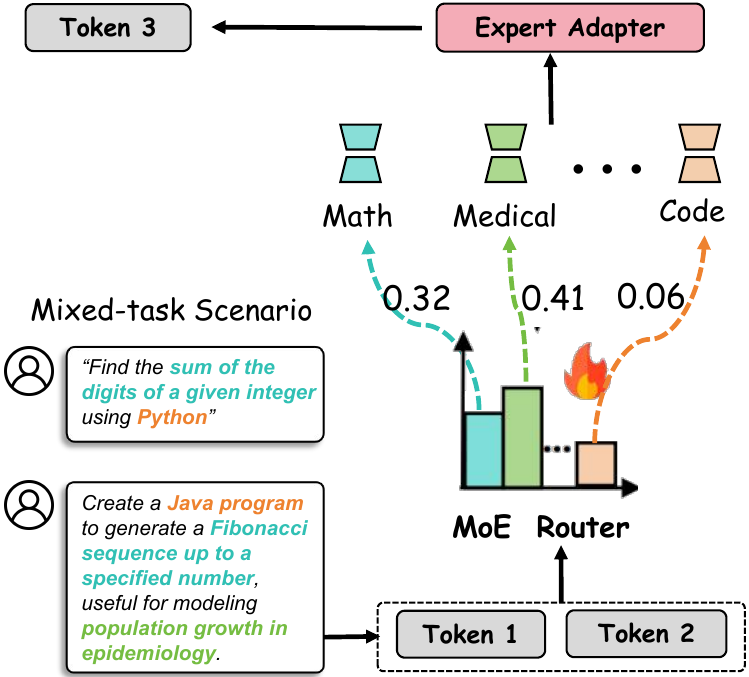}
        \caption{Illustration of MoE inference. The router computes similarity scores for all expert clusters (e.g., Math, Medical, Code). A Softmax function converts these scores into gating weights (e.g., 0.32, 0.41). The final model adaptation is a weighted summation of all experts, allowing for mixed-domain reasoning without hard selection thresholds.}
        \label{fig:design_moe}
\end{figure}

\subsection{Logit-level LLM--SLM Alignment}
\label{Tuner}
To leverage complementary strengths, \model~ performs logit-level fusion between the cloud LLM (general knowledge) and the on-device SLM (personalized context). For each decoding step $i$, both models independently produce probability distributions over the vocabulary.
Let $r_{<i}$ denote the previously generated tokens. The cloud model predicts:
\begin{equation}
P^{\text{LLM}}_i = \theta_l(r_{<i}, t),
\end{equation}
where $t$ denotes non-sensitive task metadata. The edge model predicts:
\begin{equation}
P^{\text{SLM}}_i = \theta_s(r_{<i}, t, C),
\end{equation}
where $C$ represents private user context retained locally.
A lightweight alignment module computes a fusion weight $w \in [0,1]$ based on the divergence between the two distributions:
\begin{equation}
w = \sigma\big(\text{MLP}([P^{\text{SLM}}_i ; P^{\text{LLM}}_i])\big),
\end{equation}
where $\sigma(\cdot)$ denotes the sigmoid activation function.
The final output distribution is given by:
\begin{equation}
P^{\text{out}}_i = 
w \cdot P^{\text{SLM}}_i 
+ (1 - w) \cdot P^{\text{LLM}}_i.
\label{eq:logit_fusion}
\end{equation}
This logit-level interpolation preserves the general reasoning capability of the LLM while incorporating user-specific adaptations from the SLM, without exposing proprietary model parameters or transmitting raw private data.

\subsection{Real-Time Robustness and Fallback Strategy}
\label{sec:robustness}

While the hybrid LLM-SLM architecture leverages global knowledge from the cloud, reliance on remote logits inevitably introduces dependencies on network stability and API availability. To ensure strict real-time guarantees despite these external variances, Floe employs a two-tier mitigation strategy.

First, the Privacy Detector serves as a natural traffic offloader. By processing privacy-sensitive and often latency-critical user instructions (e.g., personal context management) entirely on the edge, Floe insulates the most frequent user interactions from network jitter.
Second, for non-private queries that require cloud collaboration, we introduce a timeout-based fallback mechanism to mitigate tail latency. During the token-wise generation, the on-device SLM and cloud LLM inference occur in parallel. If the cloud logits $\mathcal{P}_{i}^{LLM}$ fail to arrive within a predefined latency budget $\tau$ (e.g., 200ms), the alignment module dynamically forces the fusion weight $w \to 1$ in Eq.~\ref{eq:logit_fusion}. This effectively enables the system to ``fall back'' to the local SLM prediction for the current token, prioritizing fluid user experience over global knowledge integration when network conditions degrade. This mechanism also optimizes cost efficiency by preventing indefinite waiting times and effectively managing API quota consumption.

%% file: sec/06_exp.tex
\section{Experiment}
\subsection{Experiment Setup}
\subsubsection{Infrastructure} We deploy \model~ on a server–client testbed consisting of one NVIDIA A40 GPU (server) and 15 heterogeneous Jetson devices (clients, 10 Jetson Nano and 5 Jetson Orin NX). Background workloads are injected to simulate runtime variance~\cite{microsoft_browser_trace,9799238}. We utilize Codecarbon \cite{codecarbon} for tracking latency and energy consumption during the fine-tuning process.

\subsubsection{Models and Datasets}
To validate \model~'s performance and efficiency, we adopt a hybrid experimental setup combining commercial API-based and open-source models. For closed-source LLMs, we use \textit{GPT-4-Turbo} \cite{GPT4} as the cloud-hosted backbone and fine-tune lightweight edge SLMs (\textit{TinyLlama} \cite{zhang2024tinyllama}) for personalized adaptation. For open-source experiments, we deploy \textit{Gemma-7B}~\cite{gemma} as the server-hosted LLM and \textit{Gemma-2B} as the collaborative SLM. Our evaluation leverages two datasets: (1) \textit{CoGenesis} \cite{logit_4}, a real-world user interaction dataset for privacy-sensitive tasks partitioned by User IDs to reflect real-world data islands, and (2) \textit{Flanv2}~\cite{flanv2}, a multi-task benchmark spanning 10 distinct clusters~\cite{wei2021finetuned} which are distributed across clients to simulate task-level non-IID heterogeneity: Struct-to-Text Conversion, Translation, Commonsense Reasoning, Sentiment Analysis, Closed-Book QA, Paraphrase Detection, Coreference Resolution, Reading Comprehension, Commonsense-Enhanced Reading Comprehension, Natural Language Inference.  

\subsubsection{Evaluation Metrics} 
We assess \model~ through two critical dimensions:  
\textit{(1) Downstream Task Performance}: Using the Big-Bench Hard (BBH)~\cite{bbh} benchmark, which contains 23 challenging multi-domain reasoning tasks from the BIG-Bench suite, we measure the customized model's world knowledge and problem-solving capabilities. Results on CoGenesis are validated using its dedicated test set.  
\textit{(2) System Efficiency}: We quantify latency and energy consumption using the WikiText2~\cite{Wikitext2} dataset, reflecting real-world deployment constraints on edge devices. 

\begin{table*}
    \centering
    \caption{Results for instruction-tuning on Specialized SLM (Gemma-2B), generic
black-box LLM (Gemma-7B) and other fine-tuning methods on the BBH benchmark (5-shot). \model~ outperforms other baselines, and improves performance over the LLM-base by 3.6\%, and 14.2\% improvement over vanilla smaller LMs.}
    \resizebox{0.95\linewidth}{!}{
    \begin{tabular}{c|cccc|ccc|c}
    \toprule[1.5pt]
    \rowcolor{my_c1}  \textbf{Task}    & \textbf{SLM-base} & \begin{tabular}[c]{@{}c@{}}\textbf{SLM-Local}\end{tabular}  & \begin{tabular}[c]{@{}c@{}}\textbf{SLM-FedAvg}\end{tabular} & \begin{tabular}[c]{@{}c@{}}\textbf{SLM-FedProto}\end{tabular} & \textbf{LLM-base}  & \begin{tabular}[c]{@{}c@{}}\textbf{LLM-FedAvg}\end{tabular} & \textbf{LLM-FedMoE} & \begin{tabular}[c]{@{}c@{}}\textbf{\model~} \end{tabular}   \\ \midrule[0.75pt]
         Boolean Expressions     & 76.1 & 73.3 & 73.9 & 74.3 & 82.0 & 80.4 & 82.4 & 82.9  \\
     Causal Judgement    & 47.3 & 47.8 & 47.8 & 47.8 & 56.0 & 56.6 & 57.1  & 58.2\\
     Date Understanding   & 25.1 & 16.6 & 24.3 & 26.9 & 47.3  & 50.6 & 51.4 & 52.7\\
     Disambiguation Qa      & 34.0 & 38.9 & 34.0 &  28.2 & 53.1 & 56.7 & 66.1 & 69.0\\
     Dyck Languages     & 24.3 & 29.6 & 25.9  & 25.7 & 51.8 & 52.7 & 66.1  & 69.4 \\
     Formal Fallacies     & 46.2 & 46.2 & 46.2  & 46.1 & 50.6 & 57.1 & 51.0 & 53.1 \\
     Geometric Shapes     & 29.6 & 29.6 & 29.6  & 30.6 & 34.3 & 36.3 & 37.6 & 41.8\\
     Hyperbaton   & 57.5 & 54.7 & 57.1  & 57.6 & 56.7 & 58.0 & 61.2 & 61.4\\
     LDeduction (5)     & 23.5 & 22.3 & 22.7  & 24.1 & 30.2 & 29.8 & 37.6& 40.8 \\
     LDeduction (7)   & 15.4 & 15.8 & 15.4 & 16.3 &29.0 & 26.1 & 26.9& 35.1 \\
     LDeduction (3)    & 33.2 & 39.3 & 34.8  & 36.3 & 49.0 & 47.3 & 50.6 & 52.7\\
     Movie Recommendation    & 44.1 & 43.3 & 44.1  & 48.2 & 69.8& 70.2 & 73.5 & 75.1\\
     Multistep Arithmetic Two  & 0.4 & 0.4 & 1.2  & 2.9  & 1.6& 2.9 & 2.0 & 2.9\\
     Navigate    & 57.5 & 57.5 & 57.5  & 57.6 & 58.4 & 58.4 & 59.6& 60.6 \\
     Object Counting     & 29.6 & 28.3 & 30.0 & 32.2 & 43.3 & 44.9 & 44.1 & 45.3\\
     Penguins in a Table  & 27.3 & 25.9 & 26.6 &  31.2  & 45.4 & 46.1 & 44.7 & 44.7\\
     Reasoning    & 17.8 & 14.6 & 17.4  & 16.3 & 43.7 & 42.4 & 40.4 & 41.2 \\
     Ruin Names   & 13.8 & 15.0 & 12.6  & 18.4  & 29.0 & 33.1 & 31.8 & 35.1 \\
     Salient & 17.4 & 15.8 & 14.6  & 18.4 & 34.3  & 35.9 & 35.1& 35.5\\
     Snarks & 53.1 & 56.6 & 48.0  & 49.1  & 59.5 & 61.3 & 57.8  & 59.5\\
     Sports Understanding  & 55.5 & 59.5 & 54.7  & 50.2 & 76.7 & 73.5 & 74.3 & 75.9 \\
     Temporal Sequences & 17.4 & 19.4 & 17.4  & 18.0 & 17.6  & 15.5 & 17.1 & 19.2 \\
     Tracking Shuffled Objects (5) & 18.2 & 21.5 & 17.0 & 20.4 & 13.9 & 16.7 & 13.9 & 15.9\\
     Tracking Shuffled Objects (7) & 12.1 & 10.1 & 12.6  & 13.1  & 11.4 & 12.2 & 12.2 & 12.2\\ 
     Tracking Shuffled Objects (3) & 35.2 & 37.2 & 36.8   & 38.0 & 33.9 & 32.7 & 33.1 & 35.5\\ 
     Web Of Lies    & 50.6 & 50.6 & 50.6  & 50.6 & 54.3 & 53.9 & 51.4  & 55.1\\ 
     Word Sorting   & 7.7 & 6.1 & 8.5 & 6.9  & 22.4 & 22.4 & 23.3& 24.6\\ \midrule[0.75pt]
     \rowcolor{my_c2} \textbf{Average} & 32.21 & 32.43 & 31.96 &  \underline{32.82} & 42.79 & 43.81 & \underline{45.12} & \textbf{46.39}\\
     \bottomrule[1.5pt]
    \end{tabular}}
    
    \label{tab:overall}
\end{table*}

\subsubsection{Baselines} We compare \model~ with the following baselines for evaluation purposes.
\textit{(1) SLM-base} ~\cite{gemma}: Directly evaluates the untuned small open-source Gemma-2B on the BBH benchmark.
\textit{(2) SLM-Local}: Local device (data owner) uses a single LoRA module to independently fine-tune the Gemma-2B model using its private dataset, without any coordination, then evaluated on BBH benchmark.
\textit{(3) SLM-FedAvg} ~\cite{fedavg}: 
Using Gemma-2B as the global model, each local device independently trains LoRA modules on its own data. The server then applies the federated averaging method to directly average the updated LoRA parameters from all devices.
\textit{(4) SLM-FedProto} ~\cite{fedproto}: To address data heterogeneity by mitigating the divergence between local and global class prototypes, thereby aligning the locally generated prototypes with their global counterparts.
\textit{(5) LLM-base} ~\cite{gemma}: The untuned Gemma-7B model is directly evaluated on the BBH benchmark.
\textit{(6) LLM-FedAvg}: Using Gemma-2B as the proxy model, and Gemma-7B as the base LLMs. After local tuning, the server (model owner) directly merges all updated LoRA with equal weight. Subsequently, the server employs proxy inference to evaluate the performance on the BBH dataset benchmark.
\textit{(7) LLM-FedMoE} ~\cite{LoRAMoE_1}: 
To address the issue of assigning equal importance to all LoRA parameters, a router network comprising a dense layer with trainable parameters is used. This network evaluates the importance of each LoRA and selects the top-3 most significant LoRA modules for the corresponding inputs.

\subsubsection{Hyperparameter Settings and Reproducibility}
For local training, we train each LoRA module according to Stanford-Alpaca~\cite{alpaca} format on NVIDIA Jetson Nano for a spectrum of tasks with 3 epochs. We set the learning rate to $1e^{-5}$, and dynamically adjust the learning rate using a cosine annealing scheduler. The training batch size and gradient accumulation steps are both 4. We enable training with half-precision floating-point parameters to reduce memory usage and speed up training. During the proxy inference process, for fair comparison, we conduct three runs to determine average performance, each utilizing different 5-shot examples per task.

\subsection{Overall Performance}
\noindent\textit{RQ1: Can \model~ effectively proxy fine-tune closed-source LLMs using smaller models while maintaining excellent performance on multi-task downstream inputs?}

Table \ref{tab:overall} presents the overall comparison results of all schemes evaluated on the multi-task benchmark (BBH). It is evident that \model~ significantly outperforms other baselines across various tasks. Notably, \model~ improves performance accuracy by 14.2\% compared to the standard small model and by 3.6\% over the conventional LLM. Moreover, \model~ proves particularly effective in handling complex mixed-task inputs, outperforming FedAvg by 2.6\% and FedMoE by 1.3\%.
 Unlike traditional models, obtaining even small accuracy improvements for LLMs can require significant training overhead.
\model~'s gain is achieved with significantly lower system cost.
As shown in Table \ref{tab:Co}, the hybrid system (LLM + SLM) outperforms standalone models in most tasks. For instance, combining GPT-4-turbo with an SLM achieves scores of 7.84 in Instructions and 7.65 in Paper Abstracts, surpassing both vanilla GPT-4-turbo (6.46/7.94) and TinyLlama-1.1B (7.66/7.32). Notably, GPT-3.5-turbo paired with an SLM exceeds GPT-4-turbo’s standalone performance in Emails (6.43 vs. 3.79). While TinyLlama-1.1B outperforms GPT-3.5-turbo in Instructions and Paper Abstracts, hybrid systems further amplify performance, particularly in Emails, likely due to the SLM's ability to refine domain-specific nuances. These results validate the hybrid architecture’s ability to balance scalability (LLMs) and efficiency (SLMs) for edge deployments.
The key factor driving this performance is that \model~ utilizes smaller models to fine-tune on local data, employing logit offset to proxy fine-tune inaccessible black-box LLMs. Additionally, \model~ incorporates personalized LoRA modules to address data heterogeneity across different tasks, offering a more tailored approach compared to the generic one-for-all solution. Furthermore, \model~ uses a prompt-wise MoE-based router to effectively handle the properties of diverse real-world requests.

\begin{table}[!ht]
  \centering
  \caption{Performance comparison of LLMs and SLMs.}
  \resizebox{0.95\linewidth}{!}{
    \begin{tabular}{c|ccc}
      \bottomrule[1.5pt]
      \rowcolor{my_c1}
      Model & Instructions & Emails & Paper Abstracts\\
      \toprule[0.75pt]
      \multicolumn{4}{c}{Generic LLM} \\ \hline
      GPT-3.5-turbo & 3.76 & 3.03 & 3.59 \\
      GPT-4-turbo   & 6.46 & 3.79 & \textbf{7.94} \\ \hline
      \multicolumn{4}{c}{Vanilla SLM} \\ \hline
      TinyLlama-1.1B & 7.66 & 5.92 & 7.32 \\ \hline
      \multicolumn{4}{c}{Generic LLM + Specialized SLMs} \\ \hline
      + GPT-3.5-turbo & \textbf{7.84} & 6.23 & 7.65 \\
      + GPT-4-turbo   & 7.82 & \textbf{6.43} & 7.87 \\
      \bottomrule[1.5pt]
    \end{tabular}
  }
  \label{tab:Co}
\end{table}

\begin{figure*}[!ht]
    \centering
    \includegraphics[width=0.85\linewidth]{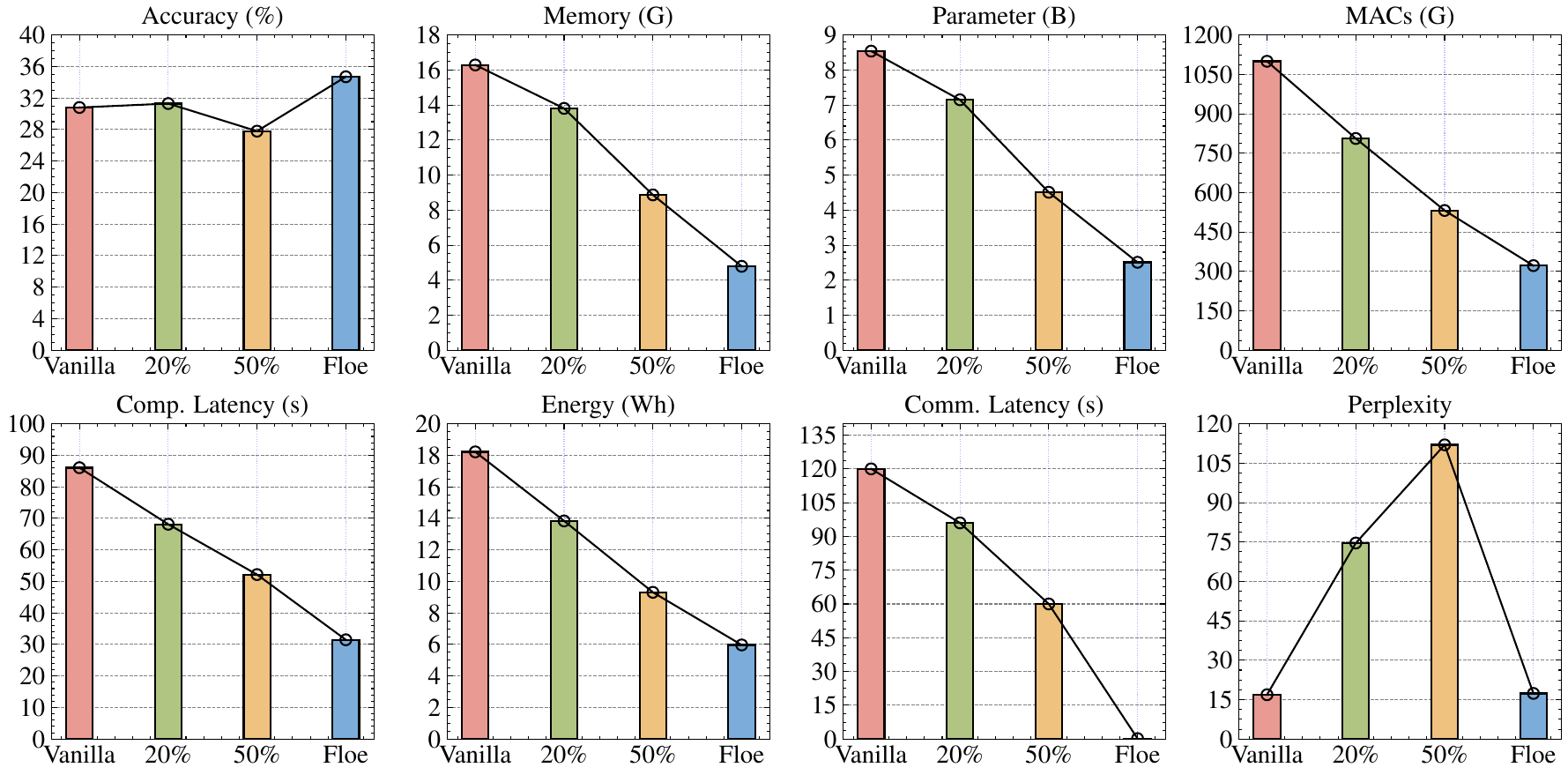}
    \caption{Efficiency comparison of using vanilla model (Gemma-7B), compressed model (rate 20\% and 50\%) and small model (Gemma-2B) as the global collaborative model.}  
    \label{fig:system}
\end{figure*}

\subsection{System Effectiveness}
\noindent \textit{RQ2: How does \model~ enhance system efficiency, particularly in reducing training energy consumption and latency?}

To evaluate the system efficiency of \model~, we examine the detailed statistics of different compression methods, including average model performance on BBH benchmark, perplexity, memory requirements, parameters, MACs (multiply accumulate operations), communication and computation latency, and energy consumption. The experiments are conducted on WikiText-2 and are executed on NVIDIA A40 for vanilla Gemma-7B, and NVIDIA Jetson Nano for other models, monitoring both energy consumption and duration with the Monsoon Power Monitor. 20\% and 50\% denote the model compression ratios; subsequently, the compressed models are deployed as global collaborative models for training. 
As shown in Figure \ref{fig:system}, compared to traditional model compression, \model~ only needs fewer parameters to fine-tune while occupying less memory footprint and obtaining the highest accuracy. 
The underlying driver is that traditional methods broadcast a lossy version as the shared model, yet the required memory remains significantly large. Meanwhile, the small model effectively speeds up the overall runtime, and \model~ achieves significant energy savings. Assuming a network bandwidth of 100MBps, \model~ reduces 99\% upload and download latency, due to its compact size and the efficient use of LoRA modules. This reduction is particularly beneficial for clients with limited network capacity.
It is also extremely user-friendly for resource-constrained devices, particularly those powered by batteries. Notably, while model compression enhances privacy by not exposing full parameters, it still necessitates access to all model parameters during the compression process. 
This highlights \model~ benefits in boosting system efficiency and resource optimization, offering a practical solution for deploying advanced machine learning in resource-constrained environments.

\subsection{ Ablation Study}
 To explore the impact of the distinct modules of \model~, we develop three model variants: (1) $\text{\model~}^{-M}$, where the heterogeneity-aware LoRA trainer is omitted, causing the server to dispatch the same standard model to devices with heterogeneous memory capacities. (2) $\text{\model~}^{-P}$, where the task-specific router/aggregator scheme are disabled. (3) $\text{\model~}^{-R}$, where the prompt-wise parameter-free MoE router is removed to assess its efficacy in handling mixed-task inputs.

\begin{figure*}
    \centering
    \includegraphics[width=0.35\linewidth]{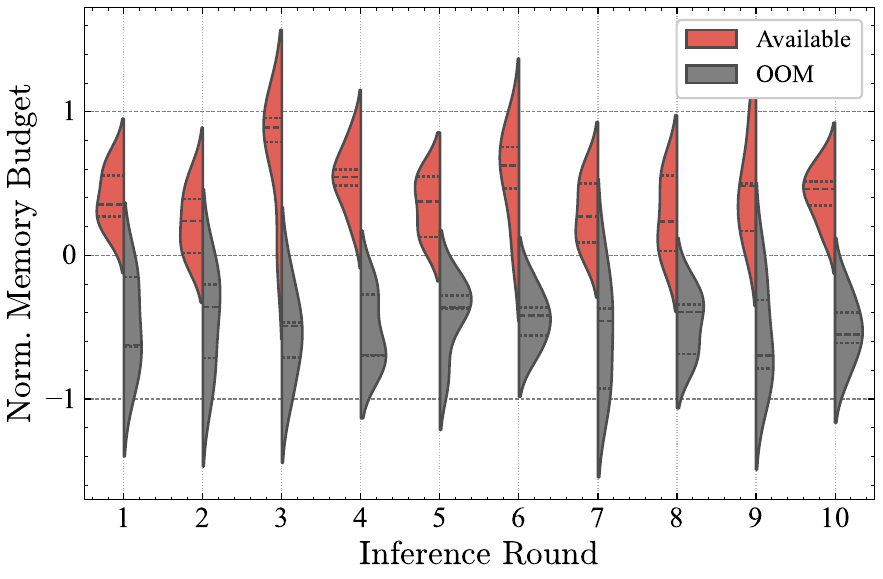}
    \includegraphics[width=0.35\linewidth]{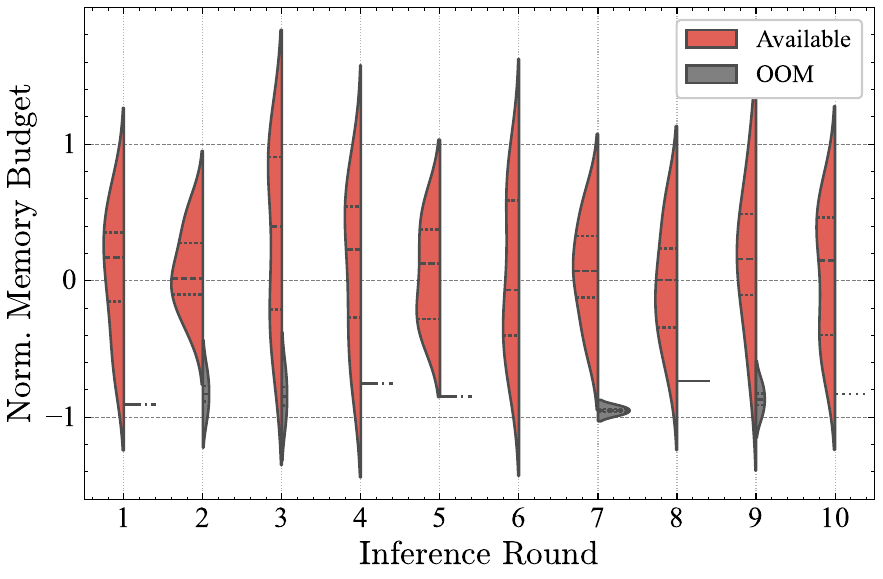}
    \caption{Analysis of model tuning on various edge device memory budgets. Left: $\text{\model~}^{-M}$; Right: \model~. }    
    \label{fig:memory}
\end{figure*}

 \begin{figure*}
    \centering
        \begin{subfigure}{0.15\linewidth}
        \centering
        \includegraphics[width=\linewidth]{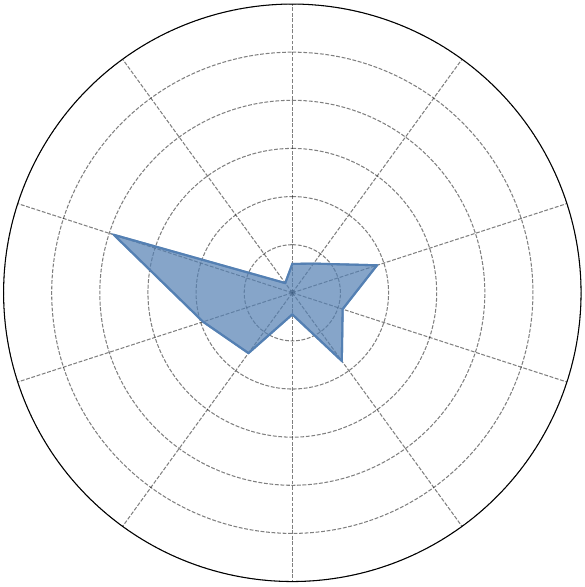}
        \caption{Local-1}
        \label{fig:sub1}
    \end{subfigure}
        \begin{subfigure}{0.15\linewidth}
        \centering
        \includegraphics[width=\linewidth]{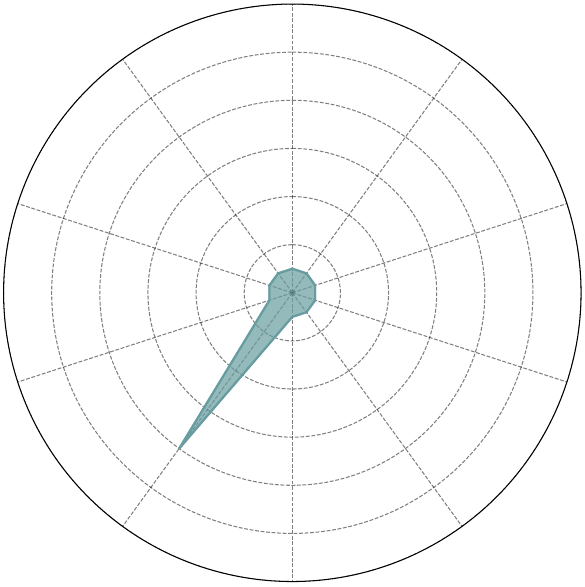}
        \caption{Local-2}
        \label{fig:sub2}
    \end{subfigure}
    \begin{subfigure}{0.15\linewidth}
        \centering
        \includegraphics[width=\linewidth]{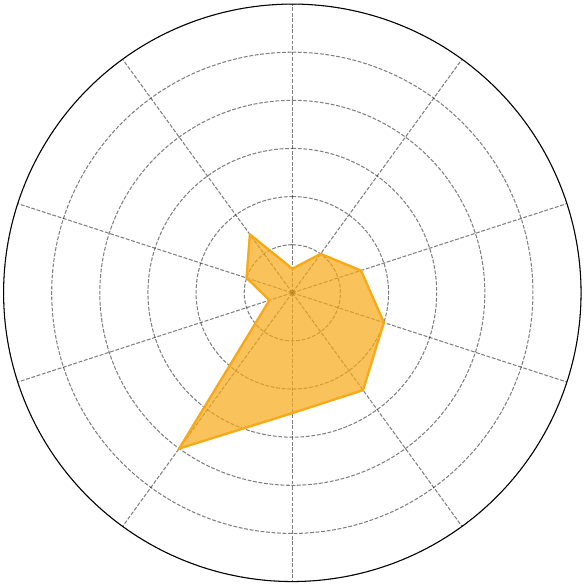}
        \caption{Local-3}
        \label{fig:sub3}
    \end{subfigure}
    \begin{subfigure}{0.15\linewidth}
        \centering
        \includegraphics[width=\linewidth]{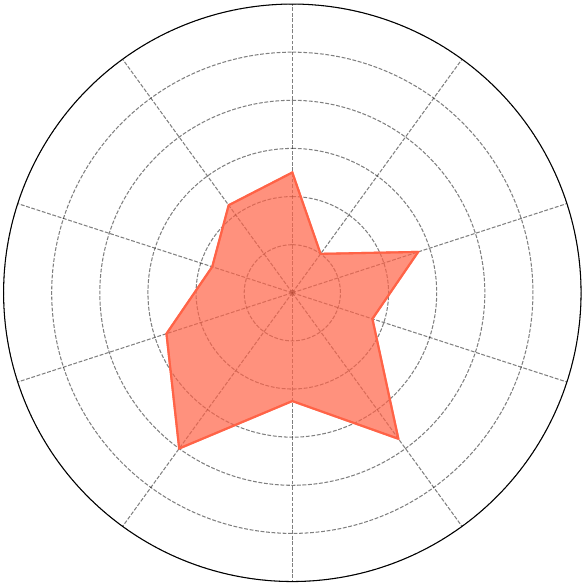}
        \caption{$\model~^{-P}$}
        \label{fig:sub4}
    \end{subfigure}
    \begin{subfigure}{0.15\linewidth}
        \centering
        \includegraphics[width=\linewidth]{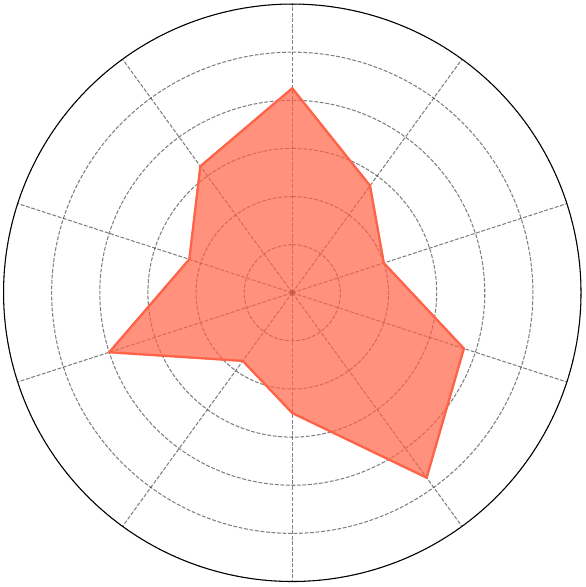}
        \caption{$\model~^{-R}$}
        \label{fig:sub5}
    \end{subfigure}
    \begin{subfigure}{0.15\linewidth}
        \centering
        \includegraphics[width=\linewidth]{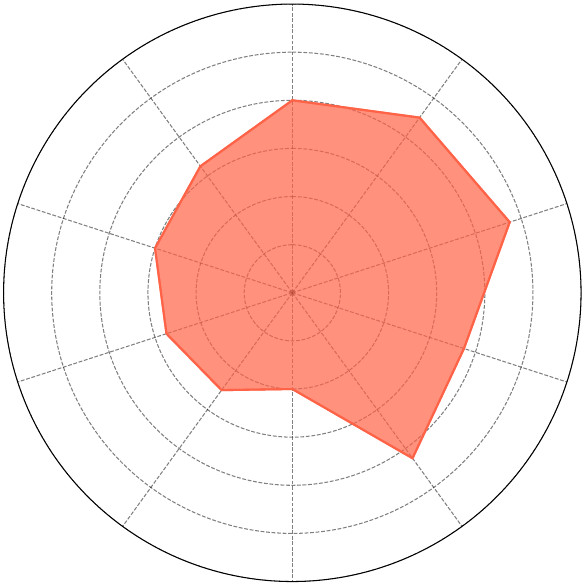}
        \caption{\model~}
        \label{fig:sub6}
    \end{subfigure}
\caption{Performance magnitudes of local fine-tuning, $\model~^{-P}$ (w/o personalized LoRA scheme), $\model~^{-R}$ (w/o MoE router) and \model~ on different downstream tasks, evaluated on BBH benchmark.}
\label{fig:personalized}
\end{figure*}

\noindent \textit{RQ3: What impact does the device-specific model pruner have on the heterogeneity memory budgets?}

Taking a closer look at the strategies employed by the device-specific pruner, Figure \ref{fig:memory} illustrates the model fine-tuning scenarios on hardware-heterogeneous edge devices. As shown in Figure \ref{fig:memory} left, i.e., $\text{\model~}^{-M}$, the memory wall hinders effective model deployment. This limitation restricts the model fine-tuning using the data in resource-constraint devices, consequently diminishing the diversity and quantity of the overall training data and adversely affecting the model performance. Meanwhile, it also undermines the foundational goal of federated learning, which is to facilitate collaboration across multiple devices. As shown in Figure \ref{fig:memory} right, by adaptive rank of the model following the memory budgets of the edge devices, the range of available devices of deployment is expanded. This enhancement boosts the robustness and scalability of \model~, expanding its utility across a broader array of resource-constrained environments.

\noindent \textit{RQ4: What impact do the personalized LoRA scheme and MoE router have on the heterogeneity data and mixed-task inputs?} 

As shown in Figure \ref{fig:personalized}, we present the performance magnitudes of $\model~^{-P}$ and $\model~^{-R}$ across 10 BBH subtasks. We observe that local fine-tuning without federated collaboration excels only in specific tasks, often due to insufficient data. In contrast, $\model~^{-P}$, which fails to address data heterogeneity and merely aggregates updated LoRA modules from local devices, results in suboptimal performance across most tasks. Furthermore, although $\model~^{-R}$ incorporates a personalized LoRA scheme for fine-tuning, it still struggles to handle practical mixed-task inputs effectively. These findings underscore the critical importance of both the personalized LoRA scheme and the MoE router in maintaining $\model~$ performance in environments characterized by diverse data and task demands.

\noindent \textit{RQ5: Can \model~ generalize to open-source and closed-source model pairs?}

To validate \model~'s adaptability beyond the Gemma family, we extended our evaluation to the state-of-the-art Llama-3 ecosystem and the closed-source Gemini-1.5-Flash API. As detailed in Table \ref{tab:ablation_models}, pairing the server-hosted Llama-3.1-70B with edge-optimized Llama-3.2 models reveals that performance gains scale with local capacity: the 3B model achieves a substantially higher boost (+5.6\%) compared to the 1B variant (+3.3\%), suggesting that stronger local experts can more effectively leverage global logits. Furthermore, the framework demonstrates robust compatibility with external APIs, delivering consistent improvements (+2.6\%) even when utilizing the lightweight Gemini-1.5-Flash backend, thereby confirming \model~'s effectiveness across diverse open-source and proprietary architectures.

\begin{table}[!t]
\centering
\caption{Ablation Study on Model Heterogeneity.}
\label{tab:ablation_models}
\resizebox{0.95\linewidth}{!}{
\begin{tabular}{llcc}
\bottomrule[1.5pt]
\rowcolor{my_c1} \textbf{Server / Edge Pair} & \textbf{Method} & \textbf{BBH Acc.} & \textbf{Gain} \\ \hline
\multirow{3}{*}{\begin{tabular}[c]{@{}l@{}}Server: Llama-3-70B\\ Edge: Llama-3-1B\end{tabular}} 
 & SLM-Only & 32.5\% & - \\
 & LLM-Only & 55.9\% & - \\
 & \model~ & 59.2\% & +3.3\% \\ \hline
\multirow{2}{*}{\begin{tabular}[c]{@{}l@{}}Server: Llama-3-70B\\ Edge: Llama-3-3B\end{tabular}} 
 & SLM-Only & 44.1\% & - \\
 & \model~ & 61.5\% & +5.6\% \\ \hline
\multirow{2}{*}{\begin{tabular}[c]{@{}l@{}}Server: Gemini-1.5-Flash\\ Edge: Llama-3-1B\end{tabular}} 
 & LLM-Only (API) & 51.2\% & - \\
 & \model~ & 53.8\% & +2.6\% \\ \bottomrule[1.5pt]
\end{tabular}}
\end{table}

\subsection{Hyperparameter Sensitivity}
\noindent\textit{Number of experts.} To thoroughly investigate the relationship between model performance and the number of LoRA modules, we systematically vary the quantity of LoRA components during proxy inference and evaluate their downstream task performance on the challenging BBH benchmark. As illustrated in Figure \ref{fig:param}, increasing the number of LoRA modules generally correlates with measurable improvements in task outcomes, demonstrating a clear trend toward enhanced model capability. This phenomenon can be explained by the increased architectural flexibility afforded by a larger pool of LoRA modules, which creates an expanded configuration space for dynamic parameter adaptation. Specifically, the greater diversity of low-rank matrices enables more precise specialization of neural network pathways in response to the heterogeneous input distributions characteristic of mixed-task scenarios.
The observed performance gains stem from two synergistic mechanisms: (1) The expanded module pool allows the router network to select task-optimal combinations of LoRA components, effectively creating a hierarchical adaptation process where different sub-modules specialize in distinct input patterns or knowledge domains. (2) The increased configuration granularity prevents the parameter interference issues common in monolithic adaptation approaches, as specialized LoRA modules can independently adjust their rank-decomposed weightings without disrupting other functional components. This modular architecture not only enhances adaptability to complex input structures but also maintains parameter efficiency by focusing computational resources on task-critical pathways.

\noindent\textit{Routing alignment accuracy.} To explicitly evaluate the decision-making quality of the MoE-Router (beyond downstream perplexity), we measured its routing alignment accuracy. We mapped the BBH tasks into the discovered latent cluster domains (e.g., Logical Reasoning, Knowledge, Linguistics) and tracked the Top-1 expert selection frequency. The router achieves an average alignment accuracy of 87.6\% across all tasks. Specifically, for distinct domains like ``Boolean Expressions'' and ``Object Counting,'' the routing accuracy exceeds 92\%, indicating high stability. Misclassifications primarily occur in ambiguous tasks (e.g., ``Date Understanding''), yet the soft-gating mechanism mitigates this by blending experts, preventing catastrophic performance drops.

\noindent\textit{Dynamic reweighting.} To decode the decision logic of the alignment network, Fig. \ref{fig:weight_dist} visualizes the distribution of fusion weights $w$ across the BBH benchmark. The weights span the full $[0, 1]$ spectrum with a mean of $\approx 0.725$, confirming dynamic arbitration rather than a static average. This distinct skew towards $w > 0.5$ empirically validates the \textit{Specialization Hypothesis}: the fine-tuned SLM typically yields lower entropy (higher confidence) than the zero-shot Cloud LLM. Consequently, the network prioritizes the local expert for domain-specific tokens, while reserving the Cloud LLM ($w < 0.5$) as a fallback regularizer during high-uncertainty scenarios.

\begin{figure}[!t]
  \centering
  \includegraphics[width=0.95\linewidth]{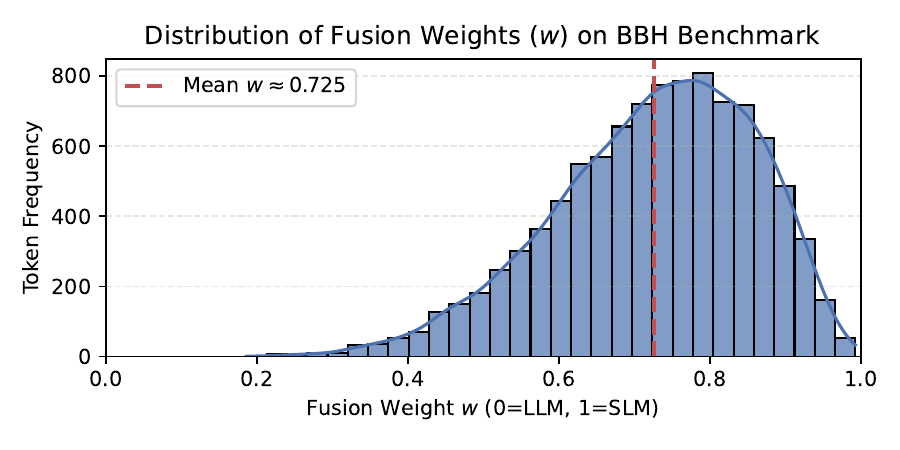}
  \caption{Distribution of dynamic fusion weights.}
  \label{fig:weight_dist}
\end{figure}

\begin{figure}[!t]
    \centering
    \includegraphics[width =0.95\linewidth]{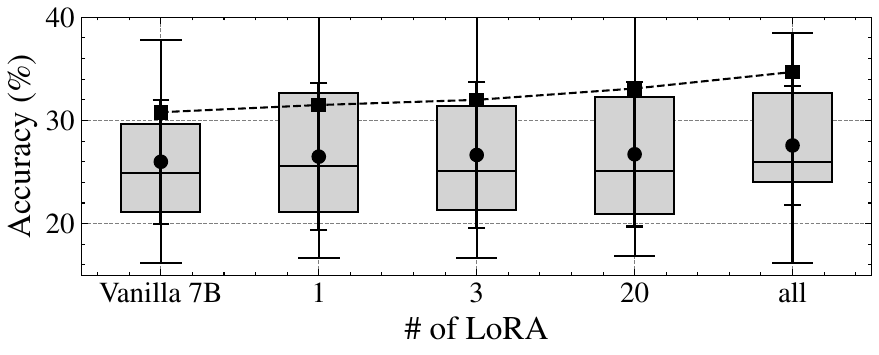}
\caption{Hyperparameter sensitivity of the number of experts.}
    \label{fig:param}
\end{figure}

\begin{figure}[!t]
    \centering
    \includegraphics[width=0.85\linewidth]{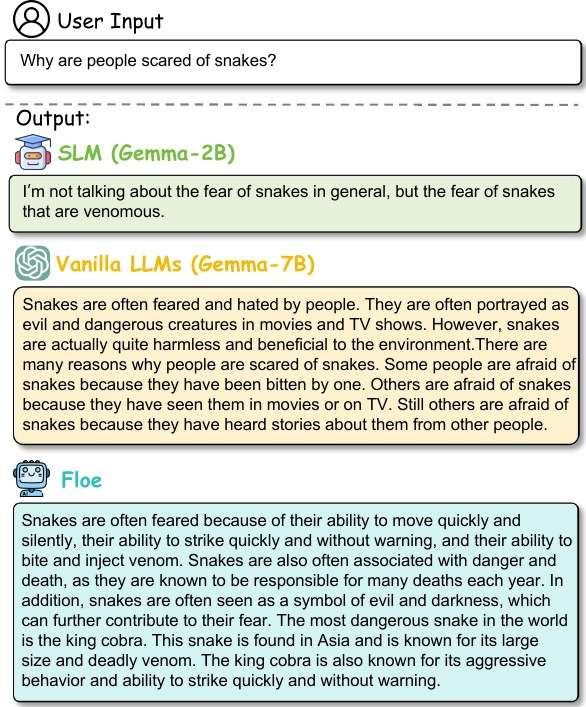}
    \caption{Case study. With identical inputs, \model~ generates more detailed and comprehensive outputs than SLM and LLM.}
    \label{fig:case_study}
\end{figure}

\noindent\textit{Impact of Network Latency}
To empirically validate the robustness of token-level communication, we evaluated the end-to-end generation latency on a Jetson Orin NX device while simulating varying network Round-Trip Times (RTT) ranging from 0ms to 500ms. As illustrated in Figure \ref{fig:Latency_Analysis}, 
for masked region (RTT $< 100$ms), the total latency curve remains flat and identical to the standalone SLM baseline ($\approx$ 65ms/token). In this region, the communication overhead is completely masked by the local computation time ($t_{net} < t_{edge} - t_{cloud}$), resulting in zero effective latency penalty. For bounded region (RTT $> 100$ms), as network conditions degrade, the latency momentarily rises but is immediately capped by the fallback timeout threshold ($\mathcal{T}_{wait}$). The plateau in Figure R1 confirms that even under catastrophic network lag, the end-to-end latency does not explode but gracefully reverts to the local SLM speed. These results demonstrate that Floe's real-time performance is robust: it benefits from cloud intelligence when the network permits, while strictly preventing communication overhead from dominating the inference cycle in adverse environments.

\begin{figure}[!ht]
    \centering
    \includegraphics[width=0.85\linewidth]{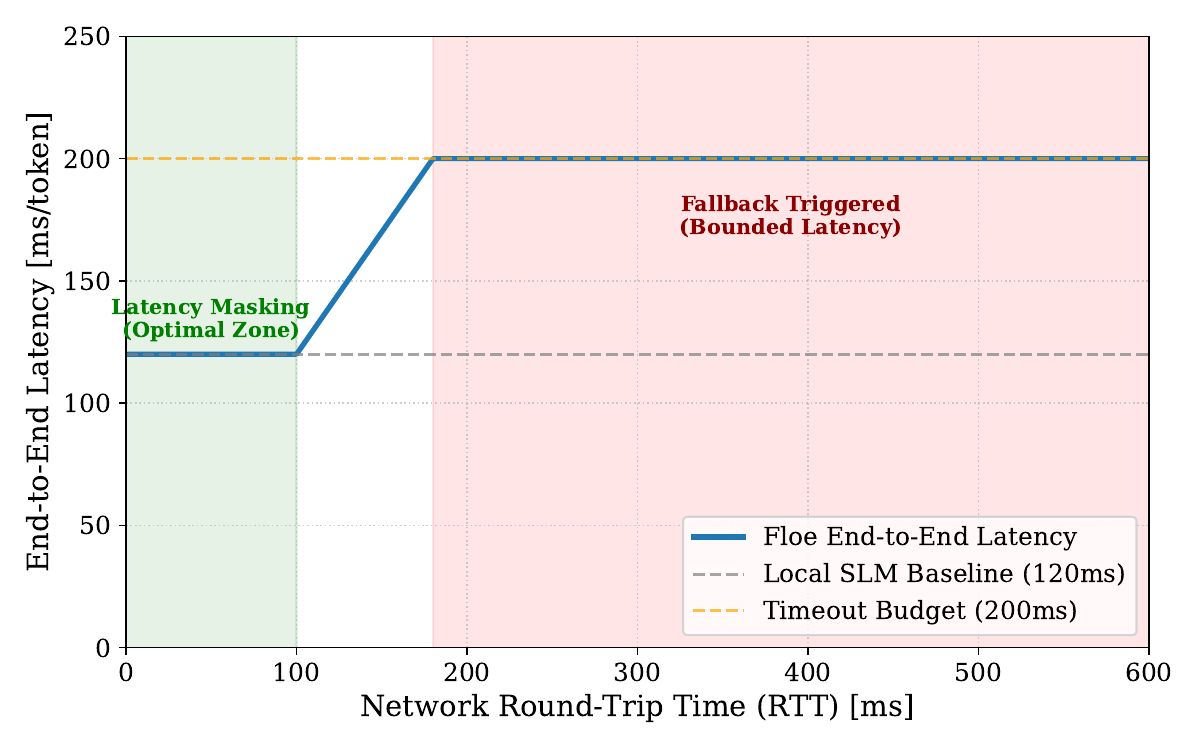}
    \caption{Impact of network RTT on end-to-End latency. The communication overhead is either masked by parallel local computation (Green Zone) or strictly bounded by the fallback timeout (Red Zone), ensuring stable real-time response.}
    \label{fig:Latency_Analysis}
\end{figure}

\subsection{Privacy Evaluation}
We evaluate the effectiveness of the two-stage privacy detector described above using a labeled subset of the CoGenesis dataset, consisting of 3,000 real-world prompts annotated as either sensitive (e.g., containing personal-related content) or non-sensitive. The detector achieves an F1 score of 94.3\%, with 97.1\% precision and 91.7\% recall, demonstrating strong accuracy in privacy classification.
Importantly, over 93\% of prompts identified as sensitive are correctly routed to the on-device SLM, fully bypassing the cloud LLM and preventing any personal data from leaving the edge. These results validate the design of the detector: the rule-based Stage 1 offers low-latency coverage of explicit entities (e.g., phone numbers, medical terms), while the semantic Stage 2 reliably captures paraphrased or implicit privacy cues via cosine similarity to category centroids. This combination ensures both efficiency and robustness in privacy-sensitive prompt routing.

\subsection{Case Study}
\noindent \textit{\textbf{RQ7:} How does \model~ perform in practice inputs compared to the vanilla models?}

We present a comparative analysis of output samples generated \model~. As demonstrated in the case studies of Figure \ref{fig:case_study}, \model~ consistently produces responses that surpass baseline models across multiple quality dimensions. Specifically, the generated text exhibits enhanced fluency through more natural syntactic structures, improved relevance via precise contextual alignment with input prompts, and increased informativeness characterized by richer detail density and conceptual depth. This performance advantage stems from our novel methodology that synergistically combines two key innovations: (1) Proxy fine-tuning enables effective knowledge distillation by leveraging federated learning principles, where the small-scale model acts as a local knowledge integrator that captures task-specific patterns while preserving privacy constraints. This process creates a robust initialization point for subsequent adaptation. (2) The prompt-wise MoE routing mechanism implements dynamic architecture selection, wherein a trainable router network adaptively activates the most appropriate LoRA modules based on input characteristics. This granular parameter-efficient approach allows precise control over knowledge fusion from different expert modules, resulting in optimized output quality without compromising computational efficiency. The combination of these techniques creates a hierarchical knowledge transfer framework that enhances both the semantic coherence and functional adaptability.

%% file: sec/07_con.tex
\section{Limitations}
Despite its promising results, \model~ still faces several important limitations. (1) The two-stage Privacy Detector is a heuristic, not a formal privacy guarantee. As identified in our evaluation (Section V.F), the detector achieves 91.7\% recall, which implies that a percentage of sensitive queries may be misclassified as ``General Task Requests''. In such cases, this sensitive information would be unintentionally leaked to the cloud-hosted LLM. There is an inherent trade-off in tuning the detector's similarity threshold ($\tau$): a more conservative threshold would increase recall (improving privacy) but reduce precision, causing more non-sensitive prompts to be unnecessarily restricted to the on-device SLM. (2) The logit-level fusion mechanism (Section IV.C)  fundamentally assumes that the vocabulary distributions of the cloud LLM and the on-device SLM are known and mappable. While this is feasible for models with open-source tokenizers like GPT-4 (using tiktoken) or for fully open-source stacks (like our Gemma-7B/2B experiment ), this approach would fail if a black-box LLM provider does not expose its tokenizer or logit vocabulary. In such a ``fully black-box'' scenario, logit fusion is infeasible. (3) The heterogeneity-aware LoRA trainer (Section III.B) dynamically selects the rank $r_i$ based purely on system constraints, namely the device's memory budget ($\mathcal{B}_i$) and a global latency bound ($\mathcal{T}$). This approach prioritizes system feasibility and maximizes device participation by preventing OOM errors. However, this creates a trade-off: a client with highly valuable or abundant local data but weak hardware will be assigned a low rank. This low rank limits the capacity of its update, potentially preventing the final aggregated model from fully benefiting from that client's rich data.

\section{Conclusion}
This paper introduces \model~, a high-performance federated learning framework that synergizes cloud-hosted black-box LLMs with edge-deployed lightweight SLMs to address privacy preservation, resource constraints, and model ownership challenges in decentralized LLM fine-tuning and inference. The framework ensures data privacy and minimizes memory overhead by retaining sensitive user data and personalization on edge devices, while leveraging the LLM’s general knowledge through a novel logits-based coordination mechanism. To accommodate edge hardware diversity, \model~ employs a heterogeneity-aware LoRA fine-tuning strategy that dynamically adapts model parameters to varying resource budgets. Additionally, a parameter-efficient prompt-wise MoE router enhances adaptability to diverse input distributions.
Unlike prior approaches that focus solely on accuracy, \model~ offers a deployment-friendly and scalable solution, making it well-suited for real-time, privacy-critical applications on resource-constrained devices.

\section*{Acknowledgments}
This research received support from the Science and Technology Development Fund of Macau (0107/2024/RIA2), Joint Science and Technology Research Project with Hong Kong and Macau in Key Areas of Nansha District's Science and Technology Plan (EF2024-00180-IOTSC) and the Multi-Year Research Grant of University of Macau (MYRG-GRG2023-00211-IOTSC-UMDF, MYRG-GRG2024-00180-IOTSC). Please ask Dr. Li Li (llili@um.edu.mo) for correspondence.